\documentclass[11pt]{article}

\usepackage[utf8]{inputenc}
\usepackage[T1]{fontenc}
\usepackage[UKenglish]{babel}
\usepackage{lmodern}
\usepackage{microtype}
\usepackage{amsmath,amssymb,amsthm}
\usepackage{graphicx}
\usepackage{xcolor}
\usepackage{url}
\usepackage{comment}
\usepackage{hyperref}
\usepackage[nameinlink,noabbrev]{cleveref}

\newcommand{\CJreconf}{\ensuremath{\overset{CJ}\leftrightarrow}}
\newcommand{\CSreconf}{\ensuremath{\overset{CS}\leftrightarrow}}
\newcommand{\CSSreconf}{\ensuremath{\overset{CS1}\leftrightarrow}}
\newcommand{\CCRCJ}{\ensuremath{\mathrm{CCR\_ CJ}}}
\newcommand{\CCRCS}{\ensuremath{\mathrm{CCR\_ CS}}}

\newcommand{\sizeMS}{\ensuremath{\text{size}(MS)}}
\newcommand{\sizeML}{\ensuremath{\text{size}(ML)}}

\theoremstyle{plain}
\newtheorem{theorem}{Theorem}
\newtheorem{lemma}{Lemma}

\theoremstyle{definition}
\newtheorem{definition}{Definition}

\newcommand{\keywords}[1]{\def\paperkeywords{#1}}
\newcommand{\printkeywords}{%
  \begin{center}
    \small\textbf{Keywords:} \paperkeywords
  \end{center}
}

\title{NP-Hardness of Connected Components Reconfiguration under Component Jumping on Caterpillar Graphs}

\author{%
Naoki Kitamura\\
\small The University of Osaka, Japan\\
\small\texttt{n-kitamura@ist.osaka-u.ac.jp}
\and
Seitaro Kawaguchi\\
\small The University of Osaka, Japan
\and
Yuya Terashima\\
\small The University of Osaka, Japan
\and
Taisuke Izumi\\
\small The University of Osaka, Japan
}
\date{}

\keywords{Connected components reconfiguration, component jumping, caterpillar graphs, path graphs, reconfiguration problem} 

\begin{document}

\maketitle

\begin{abstract}
We study the Connected Components Reconfiguration problem (CCR), in which connected components on a graph are transformed according to a specified reconfiguration rule. CCR generalizes Independent Set Reconfiguration by treating tokens not as individual vertices but as connected components of prescribed sizes. Among the variants of CCR, we focus on the component-jumping model, denoted by \CCRCJ. Nakahata.\ introduced this problem and showed that the decision problem for \CCRCJ~can be solved in $O(n^2)$ time on path graphs for arbitrary component sizes, and in polynomial time on chordal graphs when all connected components have the same size. However, the complexity on chordal graphs under a multiset size constraint remained open.

In this paper, we study this multiset version of \CCRCJ~from both complexity-theoretic and algorithmic viewpoints. First, we prove that \CCRCJ~is NP-hard even on caterpillar graphs, which is a very restricted subclass of trees and chordal graphs minimally above path graphs.  This result immediately implies NP-hardness for chordal graphs under a multiset size constraint, thereby resolving Nakahata's open problem on chordal graphs under multiset size constraints. Second, we  revisit \CCRCJ~on path graphs. We improve the previous $O(n^2)$-time algorithm for the decision problem by giving an $O(n\log n)$-time decision algorithm. Moreover, when the instance has sufficiently large empty space, we show that there exists a reconfiguration sequence of length $O(n\log n)$, and such a sequence can be output efficiently.
\end{abstract}

\printkeywords

\section{Introduction}
Combinatorial reconfiguration problems have attracted considerable attention in discrete algorithms and computational complexity. A reconfiguration problem asks whether there exists a step-by-step transformation between two feasible solutions according to a specified reconfiguration rule. Such problems have been studied for various combinatorial objects, including independent sets, cliques, vertex covers, vertex colorings, and matchings~\cite{BonamyBHIKMMW19,ItoDHPSUU11,ItoNZ16, ItoOO23, Nishimura18,Heuvel13}.

Among reconfiguration problems, those for independent sets are among the most extensively studied~\cite{BousquetMNS22,KaminskiMM12}. Given a graph $G=(V,E)$, an independent set is a vertex subset whose vertices are pairwise nonadjacent. In the reconfiguration setting, each vertex in an independent set is regarded as containing a token, and one reconfiguration step moves a single token so that the resulting vertex set remains an independent set. Two of the best-studied rules are token sliding (TS), where a token may move only to an adjacent vertex, and token jumping (TJ), where a token may move to any vertex in the graph~\cite{KaminskiMM12}. The Independent Set Reconfiguration problem under both TS and TJ is known to be PSPACE-complete~\cite{KaminskiMM12,LokshtanovM19,Wrochna20}. This has motivated extensive research on restricted graph classes~\cite{BelmonteKLMOS21,BousquetMNS22,KaminskiMM12,LokshtanovM19,Wrochna20}, as well as on other variants, such as bounded-hop models~\cite{hatano2026boundedhop} and models allowing simultaneous movement of multiple tokens~\cite{hirahara2025reachability,kvrivstan2025reconfiguration}.

Recently, the Connected Components Reconfiguration problem (CCR) was introduced as a generalization of Independent Set Reconfiguration~\cite{nakahata2025reconfiguring}. In this problem, instead of independent sets, one is given a collection of connected components on a graph. Initially, these connected components are pairwise nonadjacent and vertex-disjoint. One reconfiguration step consists of moving one connected component to another connected vertex set, under the restriction that the moved component must remain vertex-disjoint from and nonadjacent to the other connected components. As in Independent Set Reconfiguration, the following two reconfiguration rules were proposed.
\begin{itemize}
    \item Component Sliding (CS): A connected component can be moved only when the union of the connected components before and after the move is connected.
    \item Component Jumping (CJ): A connected component can be moved to any connected vertex set in the graph.
\end{itemize}

When every connected component has size $1$, the component-sliding model coincides with token sliding for Independent Set Reconfiguration, and the component-jumping model coincides with token jumping. Nakahata showed that the component-sliding model can be solved in polynomial time on path graphs and cographs~\cite{nakahata2025reconfiguring}. For the component-jumping model, they showed that the decision problem for \CCRCJ~can be solved in $O(n^2)$ time on path graphs for arbitrary component sizes, and in polynomial time on chordal graphs when all connected components have the same size. However, the complexity on chordal graphs under a multiset size constraint remained open.

Very recently, Otachi and Toyoda~\cite{otachi2026biclique} studied a related hardness question via biclique reconfiguration. They proved that Balanced Biclique Reconfiguration is PSPACE-complete on bipartite graphs, and used this result to show that Connected Components Reconfiguration with two connected components of the same size is PSPACE-complete under both CJ and CS, even on co-bipartite graphs. This gives a negative answer to a different open problem of Nakahata, concerning whether CCR becomes tractable when the number of connected components is bounded by a constant.

In this paper, we study this multiset version of \CCRCJ~from both complexity-theoretic and algorithmic viewpoints. Our first result concerns caterpillar graphs, which are a simple subclass of trees and hence also of chordal graphs.

\begin{theorem}
\label{thm:main}
    The Connected Components Reconfiguration problem in the component-jumping model is NP-hard for caterpillar graphs.
\end{theorem}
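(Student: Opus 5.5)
We reduce from a strongly NP-hard number problem, and the natural choice is 3-Partition. An instance gives integers $a_1,\dots,a_{3m}$ with $B/4<a_i<B/2$ and $\sum_i a_i = mB$, and asks whether they split into $m$ triples each summing to $B$. Because 3-Partition is strongly NP-hard, we may assume all numbers are bounded by a polynomial, so components of size $a_i$ can be written down explicitly. The point of the reduction is that on a path the only freedom is how components are packed into intervals, and this is what made the $O(n^2)$ algorithm on paths possible. A caterpillar adds pendant leaves, which we will use as bounded "parking slots" that force a packing decision.

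The construction is as follows. We take a spine that is a path, divided into $m$ "bins". Each bin is a segment of exactly $B+2$ free spine vertices (sized to leave room for the two nonadjacency gaps; the precise constant will be tuned). Bins are separated by huge \emph{blocker} components that also occupy the spine. Leaves are attached to spine vertices only where needed.

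In the initial configuration, the $3m$ item components of sizes $a_i$ (scaled, e.g.\ by a factor $c$, to absorb gap vertices) sit in one storage region. In the target configuration, that region must be occupied by a single large component $L$ of size roughly $mB$. We also give $L$ a distinct home elsewhere, so that $L$ has to travel there.

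The intended forced behaviour is the following.
\begin{itemize}
\item To move $L$ into the storage region, every item must first be evacuated.
\item Blockers are too large to move anywhere except onto their own positions, since no other free connected set of that size exists. This will be ensured by using pendant leaves to make the spine lengths tight.
\item Hence the items must simultaneously reside inside the bins at the moment $L$ jumps.
\end{itemize}
Because of the nonadjacency requirement, three items in a bin of length $B+O(1)$ (after scaling) fit exactly when their sizes sum to at most $B$. The bound $a_i>B/4$ prevents four items from fitting. Therefore a snapshot with all items in bins yields a 3-partition. Conversely, given a 3-partition, we move the items one by one into their bins, then jump $L$, and then move the items onward to their target positions, which we choose equal to their initial bin assignment or any easily reachable spot.

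The main obstacle is soundness, namely showing that no clever sequence circumvents the bins. Two threats must be ruled out. First, items could temporarily use $L$'s old home or other free space, and we need to show they cannot absorb the surplus. Second, items could exploit leaves, since a leaf together with its spine neighbour forms extra connected sets.

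I would handle both by a global counting invariant. I would define the total free vertex capacity outside the storage region and show it equals the bin capacity plus a slack smaller than the smallest item. The key is to place $L$'s home adjacent to the storage region, so that $L$ must move in a specific way. A simpler alternative is to let $L$ start on the storage region's complement and make the bins the only free space.

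The caterpillar's leaves are what distinguish this from the path case, which is polynomial. So the delicate part is designing them to make blockers immovable while not creating new room for items; I would attach leaves only to blocker vertices. If the invariant fails, I would fall back to reducing from bin packing with items forced via the leaves instead.
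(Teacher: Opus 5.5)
Your soundness argument rests on ``the items must simultaneously reside inside the bins at the moment $L$ jumps'', but nothing in your construction forces $L$ to relocate in a single move. Worse, the design you call the key one, placing $L$'s home $H$ adjacent to the storage region $R$, makes the reduction unsound.

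Under CJ a component may move onto a set overlapping its current position. So if $H$ and $R$ form one contiguous stretch of spine with no immovable blocker between them, $L$ can creep into $R$:
jump the item nearest to $L$ into any empty bin (it fits, since every item is smaller than half a bin);
shift $L$ over the $a_i+1$ vertices just vacated;
jump that item into the freed end of $H$.
Repeating this $3m$ times reaches the target while no bin ever holds more than one item. Every instance is then a yes-instance, whether or not a 3-partition exists. Your counting invariant cannot exclude this, because the room released in $H$ is exactly the surplus it was meant to rule out. Your other alternative is too vague to check and likewise never secures a single jump.

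A second gap is the target configuration. The reduction must fix the items' target positions without knowing a partition, so ``their initial bin assignment'' is not available. Moreover, a target with all items in bins would itself have to encode a 3-partition.

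The missing idea is the mechanism the paper uses for its primary gadget $L_1$. Separate $H$ from $R$ by an immovable separator: a star whose component is its centre plus all of its leaves, with more leaves than any spine segment has vertices. Then the first-mover argument of Lemma~\ref{lma:4} keeps it on its centre, it cannot slide to shift capacity between neighbouring bins, and its leaves are unusable by items. Also make the usable lengths of $H$ and $R$ both exactly $|L|$.

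With this, $L$ fits only in $H$ or $R$, so it must cross in one jump. At that moment $H$ is full and $R$ is empty, which forces every item into a bin. With usable bin length $B+2$ and $B/4<a_i<B/2$, every bin has sum at most $B$, hence exactly $B$ with three items. Put the items' targets in the vacated $H$.

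With these repairs your route is correct and simpler than the paper's. The paper uses the same primary-component/separator forcing. It then has to push the padding components into independent-set and vertex-cover gadgets, and their occupants into edge gadgets, with $m^{10}$-scaled sizes and a slack-counting lemma (Lemma~\ref{lma:5}), to capture Independent Set on $3$-regular graphs. With 3-Partition, the bins encode the combinatorics directly.
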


Since caterpillar graphs are a subclass of chordal graphs, this immediately implies NP-hardness for chordal graphs under a multiset size constraint, thereby resolving Nakahata's open problem on chordal graphs under multiset size constraints.

Our second result improves the algorithmic complexity of the decision problem for \CCRCJ~on path graphs.

\begin{theorem}
\label{thm:intro-path}
    The decision problem for \CCRCJ~on path graphs can be solved in $O(n\log n)$ time.
\end{theorem}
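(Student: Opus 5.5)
The plan is to characterise reachability by a canonical normal form that can be computed greedily. Model an instance on the path $P_n$ as a family of disjoint intervals (the components) of sizes $a_1,\dots,a_k$, with at least one free vertex between consecutive intervals. A component of size $a$ can be moved exactly when, after deleting it, the path contains a free run that can hold an interval of size $a$ together with the separating vertices it needs. That means a run of length at least $a+2$ in the interior, or $a+1$ at an end of the path, where the run may include the space freed by the component itself. Every CJ move is reversible. So it suffices to map each configuration to a representative of its reachability class and accept iff the initial and target configurations have the same representative. The multisets of sizes must of course agree; this is checked by sorting in $O(n\log n)$.

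\textbf{Step 1 (threshold process).} Let $L$ denote the largest free run length that can be created, and call a component \emph{small} if it fits in a run of length $L$ and \emph{large} otherwise. I would compute $L$ by a monotone growth process:
\begin{itemize}
\item Start from the current maximum free run.
\item Any component that fits can be jumped into the far end of that run. This shifts slack: every gap adjacent to the moved component merges into a larger free run.
\item Repeat while some not-yet-processed component fits.
\end{itemize}
Since $L$ only increases, it is enough to process components in increasing order of size, using a balanced search tree over the gaps keyed by length to merge gaps and query the maximum. This takes $O(n\log n)$ time.

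I must prove that the final value $L^\ast$ is exactly the largest run reachable from the configuration. The inequality ``reachable $\ge L^\ast$'' holds by construction. For ``$\le$'', I would argue that a large component never moves in any reconfiguration sequence. Then the free space between two consecutive large components, minus the space that small components must occupy there, is an invariant bound, and $L^\ast$ already attains the maximum.

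\textbf{Step 2 (normal form).} I would next show that large components are frozen in place, and that small components can be rearranged arbitrarily. This includes moving small components across large ones, since a jump is not blocked by anything between the source and the target. The only constraint is a capacity condition: each segment between consecutive frozen components, or between a frozen component and an end of the path, can hold a given set of small components only if their sizes plus the separating vertices fit in it. The canonical form I would use fixes the large components at their positions. It then fills the segments with the small components greedily, packed to the left in a fixed order, for example sorted by size.

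The decision procedure then checks three things:
\begin{itemize}
\item the two configurations have the same $L^\ast$;
\item they have the same frozen large components at the same positions;
\item the same multiset of small components.
\end{itemize}
Each check takes $O(n\log n)$ time. One special case must be handled separately: when $L^\ast$ admits no move at all, every component is frozen, and we simply test the two configurations for equality.

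\textbf{Main obstacle.} I expect the hard part to be the converse direction of Step 2: that any two placements of the small components satisfying the segment capacities are mutually reachable. The difficulty is that a large free run exists in only one place at a time, so components must be routed through it. I would first try induction on the number of small components, in decreasing order of size. At each step I would use the maximal run of length $L^\ast$ as a buffer: park the current largest small component in it, compact the rest of the configuration to recreate a run of length $L^\ast$ at the target location, and place the component there. The delicate point is to show that parking never reduces the available run below what the remaining smaller components need. Monotonicity of the threshold process should give this. If it does not, the fallback is to refine the canonical form so that it records, for each segment, the total slack rather than only the multiset of sizes.
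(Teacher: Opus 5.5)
\textbf{There is a genuine gap.} Steps~1 and~2 rest on the claim that a component too large to be jumped into some other free run ``never moves'' and is ``frozen in place''. That would make positions and per-segment capacities invariants. Under CJ this claim is false.

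The relation $U\CJreconf U'$ only requires that one component $C$ be replaced by one component $C'$, and nothing forbids $C'$ from overlapping $C$. So every component, whatever its size, can slide within the free vertices around it. Your own opening criterion, which lets the run include the space freed by the component, already says this, so it contradicts Steps~1--2. Three consequences follow:
\begin{itemize}
\item Every configuration can be compacted to the left by shifting the components one at a time.
\item Any two configurations with the same left-to-right sequence of component sizes are mutually reachable.
\item Size matters only for jumping \emph{over} another component. Such a jump needs room for the moved component in addition to all $k$ current components and their separators, i.e.\ size at most $b(A,k)=n-|A|-k$.
\end{itemize}
This threshold is a fixed number computable in $O(n)$. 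Your growth process ignores shifts, so it computes a different and wrong value.

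Concretely, on the path $v_1,\dots,v_{13}$ take $A=\{v_1,\dots,v_5\}\cup\{v_7,\dots,v_{11}\}$ and $B=\{v_1,\dots,v_5\}\cup\{v_9,\dots,v_{13}\}$. A single move gives $A\CJreconf B$. Yet the largest free run of $A$ has length $2$, so neither size-$5$ component ``fits''. Both are declared frozen, and your test rejects, either because their positions differ or via your ``no move at all'' special case.

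What is actually invariant is the relative \emph{order}, not the position, of the components of size greater than $b(A,k)$. The paper does not derive this structure itself. It invokes Nakahata's characterization: $A$ and $B$ are reconfigurable iff $\max_{(x_i,y_j)\in\operatorname{inv}(h'_A,h'_B)}\min\{x_i,y_j\}\le b(A,k)$. The sufficiency half of that result is essentially the ``converse direction'' you flagged as the main obstacle. The paper's own work is only to evaluate the left-hand side in $O(n\log n)$ time with a segment tree: it scans $h'_B$ and uses range-maximum queries to decide, for each component, whether it has an inversion partner at least as large.

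You can repair your normal form as follows: accept iff the multisets of sizes agree and the left-to-right sequences of sizes exceeding $b(A,k)$ agree. This is equivalent to Nakahata's condition, because the max--min exceeds $b(A,k)$ iff two components both larger than $b(A,k)$ are inverted. It can even be checked in linear time. The sufficiency direction would still have to be proved or cited.
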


We also show that, when the instance has sufficiently large empty space, there exists a reconfiguration sequence of length $O(n\log n)$, and such a sequence can be output algorithmically. This result is stated formally as Theorem~\ref{thm:myconstruct}.

\section{Preliminary}
\subsection{Graph}
In this paper, we consider a simple undirected unweighted graph $G=(V(G),E(G))$. Let $n=|V(G)|$ and $m=|E(G)|$. When the graph $G$ is clear from the context, we simply write $V$ and $E$ instead of $V(G)$ and $E(G)$, respectively. 
We define $m(U)$ as the multiset of the sizes of connected components in the induced subgraph $G[U]$, and call it the \emph{connected-component multiset} of $U$ in $G$.

A caterpillar graph is a tree obtained from a path by attaching degree-one vertices. More precisely, a caterpillar graph can be decomposed into a vertex set $S$ (the spine) and a vertex set $L$ (the legs), where $G[S]$ is a path graph, every vertex in $L$ has degree one, and every vertex in $L$ is adjacent to a vertex in $S$. For a vertex $v \in S$, we define a leg adjacent to $v$ to be a \emph{leaf attached to $v$}. Note that a vertex in the spine may itself have degree one, but such a vertex is not regarded as a leaf attached to the spine. An example of a caterpillar graph is shown in Figure~\ref{fig:0}.

\begin{figure}[t]
  \centering
  \includegraphics[width=0.5\textwidth]{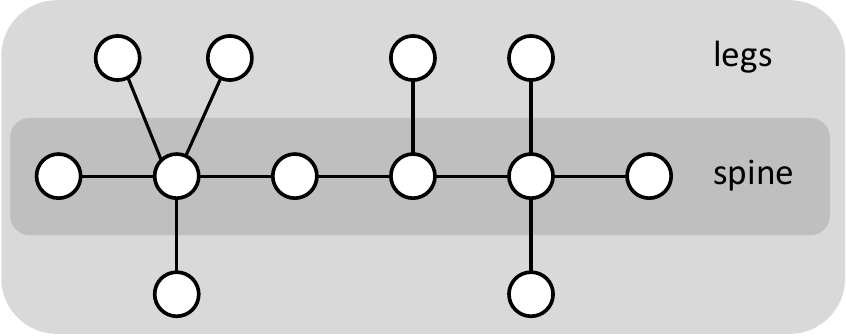}
  \caption{An example of caterpillar graph.}
  \label{fig:0}
\end{figure}

\subsection{Connected Components Reconfiguration (CCR)}
The connected components reconfiguration problem is defined as follows.

\begin{definition}[Connected Components Reconfiguration (CCR)]
Given as input a graph $G$, vertex subsets $A,B\subseteq V$, a multiset of positive integers $\mathcal{M}$, and a reconfiguration rule $R$, the connected components reconfiguration problem asks whether there exists a sequence of vertex subsets from $A$ to $B$ such that
\begin{enumerate}
    \item the connected-component multiset of every vertex subset in the sequence is equal to $\mathcal{M}$, and
    \item every two consecutive vertex subsets in the sequence are adjacent under $R$.
\end{enumerate}
\end{definition}

As for the reconfiguration rule $R$, we consider component jumping (CJ) and component sliding (CS). A reconfiguration rule defines the adjacency relation between vertex subsets. For $U,U'\subseteq V(G)$ with $m(U)=m(U')$, let $C(U)$ and $C(U')$ denote the sets of connected components of $U$ and $U'$, respectively. We write $U \CJreconf U'$ and $U \CSreconf U'$ to denote that $U$ and $U'$ are adjacent under CJ and CS, respectively. Furthermore, when $|C(U)\setminus C(U')| = |C(U')\setminus C(U)|=1$, let $C$ and $C'$ denote the unique connected components in $C(U)\setminus C(U')$ and $C(U')\setminus C(U)$, respectively. Then the relations \CJreconf, \CSreconf, and \CSSreconf~are defined as follows:
\begin{align*}
    & U \CJreconf U' \Leftrightarrow |C(U)\setminus C(U')| = |C(U')\setminus C(U)| = 1, \\
    & U \CSreconf U' \Leftrightarrow U \CJreconf U' \wedge C \cup C' \text{ is connected}, \\
    & U \CSSreconf U' \Leftrightarrow U \CSreconf U' \wedge |C \setminus C'| = |C' \setminus C| = 1.
\end{align*}

The connected components reconfiguration problem under the rule \CSreconf~is called \CCRCS, and that under the rule \CJreconf~is called \CCRCJ. In this paper, we consider only \CCRCJ. Observe that if the connected-component multisets of $A$ and $B$ are not equal to $\mathcal{M}$, then $A$ and $B$ are clearly not reconfigurable under \CCRCJ. Therefore, in the following, we represent an instance of \CCRCJ~simply by a triple $(G,A,B)$.

By the definition of \CCRCJ, the connected-component multiset must remain unchanged before and after each move. Equivalently, moving a connected component must satisfy the following conditions:
\begin{enumerate}
    \item The moved component induces a connected subgraph of the same size as the original component.
    \item The moved component shares no vertex with any other connected component.
    \item The moved component is not adjacent to any other connected component.
\end{enumerate}

\subsection{Independent Set Problem}
For a vertex subset $U\subseteq V$, if $G[U]$ has no edges, then $U$ is called an independent set of $G$. Given as input a graph $G$ and a positive integer $k$, the problem of deciding whether $G$ has an independent set of size $k$ is called the Independent Set problem. A graph in which every vertex has degree $h$ is called an $h$-regular graph. The following fact is known for $3$-regular graphs.
\begin{theorem}[\cite{fleischner2010maximum}]
The Independent Set problem is NP-complete for $3$-regular graphs.
\end{theorem}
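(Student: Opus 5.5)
The statement is cited from~\cite{fleischner2010maximum}, but I would re-derive it by a chain of independence-number-preserving local reductions. Membership in NP is immediate: an independent set of size $k$ is a certificate that can be checked in polynomial time. For hardness, I would start from the NP-complete Independent Set problem on general graphs. I would transform an instance $(G,k)$ into $(G'',k'')$ with $G''$ $3$-regular. The transformation proceeds in two stages, first reducing maximum degree to $3$, then padding to exact degree $3$. In each stage the independence number shifts by a computable amount.

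\emph{Stage 1 (maximum degree $3$).} We may delete isolated vertices and decrease $k$ accordingly. Each vertex $v$ of degree $d\ge 4$ is replaced by a path $v_1,w_1,v_2,w_2,\dots,w_{d-1},v_d$, and the $d$ original edges at $v$ are distributed so that each $v_i$ receives exactly one. Every vertex of the new graph $G'$ then has degree at most $3$. I would prove
\[
\alpha(G')=\alpha(G)+\sum_{v:\deg v\ge 4}(d_v-1).
\]
The ``$\ge$'' direction is direct. If $v$ is in the set, take all $v_i$; otherwise take all $w_i$. In both cases the gadget contributes exactly $d_v-1$ beyond $v$'s own contribution. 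For ``$\le$'' I would use a normalization argument. Take a maximum independent set $I'$ of $G'$; a path on $2d-1$ vertices contributes at most $d$ vertices, and exactly $d$ only if it is $\{v_1,\dots,v_d\}$. If $I'$ contains all $v_i$, declare $v$ chosen. Otherwise the gadget contributes at most $d-1$, and we swap it to $\{w_1,\dots,w_{d-1}\}$. This swap does not decrease the size and frees all external neighbours. Projecting back yields an independent set of $G$ of the required size. The main obstacle I expect is making this exchange argument airtight when adjacent original vertices both have gadgets. I would handle it by normalizing gadgets one at a time and checking that each swap only removes conflicts, never creates them.

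\emph{Stage 2 (exact degree $3$).} Let $H$ be $K_4$ on $\{a,b,c,d\}$ with the edge $ab$ subdivided by a new vertex $x$. In $H$, the vertex $x$ has degree $2$ and all other vertices have degree $3$. Moreover $\alpha(H)=2$, and a maximum independent set avoiding $x$ exists, namely $\{a,b\}$. For every vertex $u$ of $G'$ with degree $\delta<3$, I would attach $3-\delta$ fresh copies of $H$, joining $u$ to the $x$ of each copy. The result $G''$ is $3$-regular. Since each copy can always contribute $2$ without using $x$ and can never contribute more than $2$, we get $\alpha(G'')=\alpha(G')+2t$, where $t$ is the number of attached copies.

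Setting $k''=k+\sum_v(d_v-1)+2t$ gives a polynomial reduction with $\alpha(G)\ge k \iff \alpha(G'')\ge k''$. This proves NP-completeness on $3$-regular graphs.
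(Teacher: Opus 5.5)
Your proof is correct. It necessarily takes a different route from the paper, which proves nothing here. The paper simply imports the statement from Fleischner, Sabidussi and Sarvanov, whose result is stronger: hardness persists on cubic graphs with extra structure, e.g.\ Hamiltonian ones. Your self-contained reduction from general Independent Set checks out. In Stage~1 the key fact is that a path on $2d-1$ vertices has $\{v_1,\dots,v_d\}$ as its \emph{unique} maximum independent set. Since the $w_i$ have no neighbours outside their gadget, replacing every non-full gadget by $\{w_1,\dots,w_{d-1}\}$ (even all at once) can only remove conflicts, so the obstacle you anticipate does not really arise. In Stage~2, $H$ is cubic except at $x$, $\alpha(H)=2$ is attained by $\{a,b\}$, and $G''[V(G')]=G'$ gives the matching upper bound. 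Deleting isolated vertices is harmless but unnecessary, since Stage~2 would attach three copies of $H$ to them. The citation buys brevity and a stronger graph class. Your argument buys independence from the reference while delivering exactly what the caterpillar reduction uses: a simple loopless graph in which each vertex has three distinct incident edges $e_{v,1},e_{v,2},e_{v,3}$, and $2m=3n$ (used in Lemma~\ref{lma:5}). Your $G''$ has these properties because distinct edges at a split vertex go to distinct $v_i$ and the copies of $H$ are fresh.
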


\section{Proof of the NP-Hardness of \CCRCJ~on Caterpillar Graphs}
In this section, we prove that \CCRCJ~on caterpillar graphs is NP-hard. The proof is given by a reduction from the Independent Set problem on $3$-regular graphs. More specifically, we reduce an instance $\Phi=(G,k)$ of the Independent Set problem on a $3$-regular graph to an instance $\Phi'=(G',A,B)$ of \CCRCJ~such that $\Phi'$ is reconfigurable if and only if $G$ has an independent set of size $k$.

\subsection{Construction of the Gadgets}
Let the vertex set of the instance $\Phi$ be $V=\{v_1,v_2,\dots,v_{n}\}$, and let the edge set be $E=\{e_1,e_2,\dots,e_{m}\}$. Let $\pi^{-1}:E\rightarrow \mathbb{N}$ be the function that maps each edge $e_i$ to $i$. For each vertex $v$, we denote the three incident edges by $e_{v,1},e_{v,2},e_{v,3}$. Although the graphs considered in this paper are undirected, we assume that each edge is assigned an orientation for the purpose of the reduction. For an edge $e$ and a vertex $v$ incident to $e$, we define $\rho_e(v)$ as follows: if $v$ is the tail of $e$, then $\rho_e(v)=0$, and if $v$ is the head of $e$, then $\rho_e(v)=1$. To carry out the reduction, we construct the following gadgets.
\begin{itemize}
    \item $n$ independent-set gadgets $\{I_1,I_2,\dots,I_n\}$.
    \item $n$ vertex-cover gadgets $\{VC_1,VC_2,\dots,VC_n\}$.
    \item $2m$ edge gadgets $\{Ed_{1,a},Ed_{2,a},\dots,Ed_{m,a},Ed_{1,b},Ed_{2,b},\dots,Ed_{m,b}\}$.
    \item $k$ independent-set padding gadgets $\{MS_1,MS_2,\dots,MS_k\}$.
    \item $n-k$ vertex-cover padding gadgets $\{ML_1,ML_2,\dots,ML_{n-k}\}$.
    \item one primary gadget $L_1$.
\end{itemize}

In addition, between gadgets we prepare separators, which are used to separate the gadgets from one another. Below, we explain each gadget.

\paragraph*{Independent-Set Gadget}
The independent-set gadget $I_i$ is constructed as follows.
\begin{itemize}
    \item Prepare a path (spine) $P=\{p_1,p_2,\dots,p_{11}\}$ on $11$ vertices.
    \item For each of the vertices $p_2,p_4,p_6,p_8,p_{10}$, the number of attached leaves is set to $0$.
    \item For any $j\in\{1,2,3\}$, the number of leaves attached to the vertex $p_{4j-3}$ is set to $500\pi^{-1}(e_{v_i,j})m^{10}+m+\rho_{e_{v_i,j}}(v_i)$.
    \item For any $j\in\{1,2,3\}$, the number of leaves attached to the vertex $p_{4j-1}$ is set to $500(3m-\pi^{-1}(e_{v_i,j}))m^{10}+m+1-\rho_{e_{v_i,j}}(v_i)$.
\end{itemize}

An illustration of the independent-set gadget is shown in Figure~\ref{fig:1}. Hereafter, for a vertex $p_j$ contained in the independent-set gadget $I_i$, we write it as $p_{j,I_i}$.

  \begin{figure}[t]
  \centering
  \includegraphics[width=1.0\textwidth]{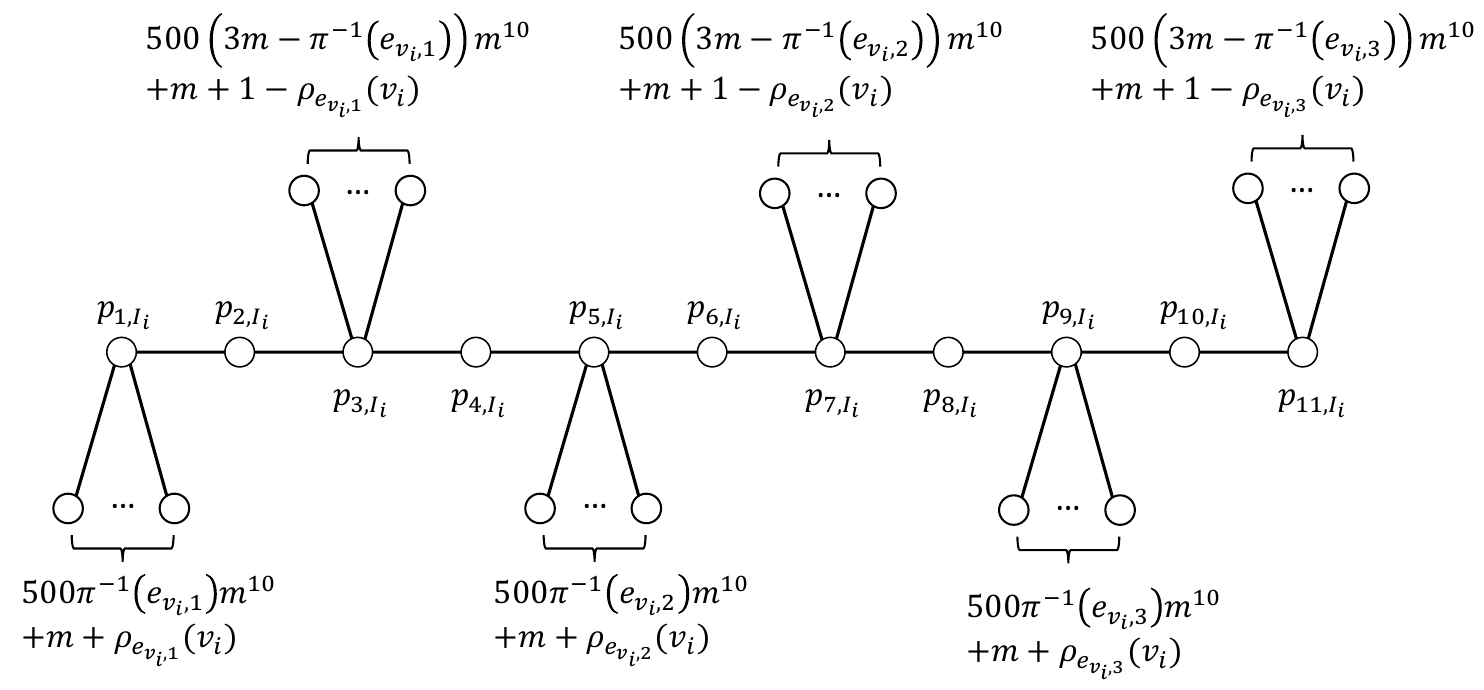}
  \caption{An example of the independent-set gadget $I_i$.}
  \label{fig:1}
  \end{figure}

\paragraph*{Vertex-Cover Gadget}
The vertex-cover gadget $VC_i$ is constructed as follows.
\begin{itemize}
    \item Prepare a path (spine) $P=\{p_1,p_2,\dots,p_{11}\}$ on $11$ vertices.
    \item For each of the vertices $p_2,p_4,p_6,p_8,p_{10}$, the number of attached leaves is set to $0$.
    \item For any $j\in\{1,2,3\}$, the number of leaves attached to the vertex $p_{4j-3}$ is set to $500\pi^{-1}(e_{v_i,j})m^{10}+\rho_{e_{v_i,j}}(v_i)$.
    \item For any $j\in\{1,2,3\}$, the number of leaves attached to the vertex $p_{4j-1}$ is set to $500(3m-\pi^{-1}(e_{v_i,j}))m^{10}+1-\rho_{e_{v_i,j}}(v_i)$.
\end{itemize}

An illustration of the vertex-cover gadget is shown in Figure~\ref{fig:2}. Hereafter, for a vertex $p_j$ contained in the vertex-cover gadget $VC_i$, we write it as $p_{j,VC_i}$.
\begin{figure}[t]
  \centering
  \includegraphics[width=1.0\textwidth]{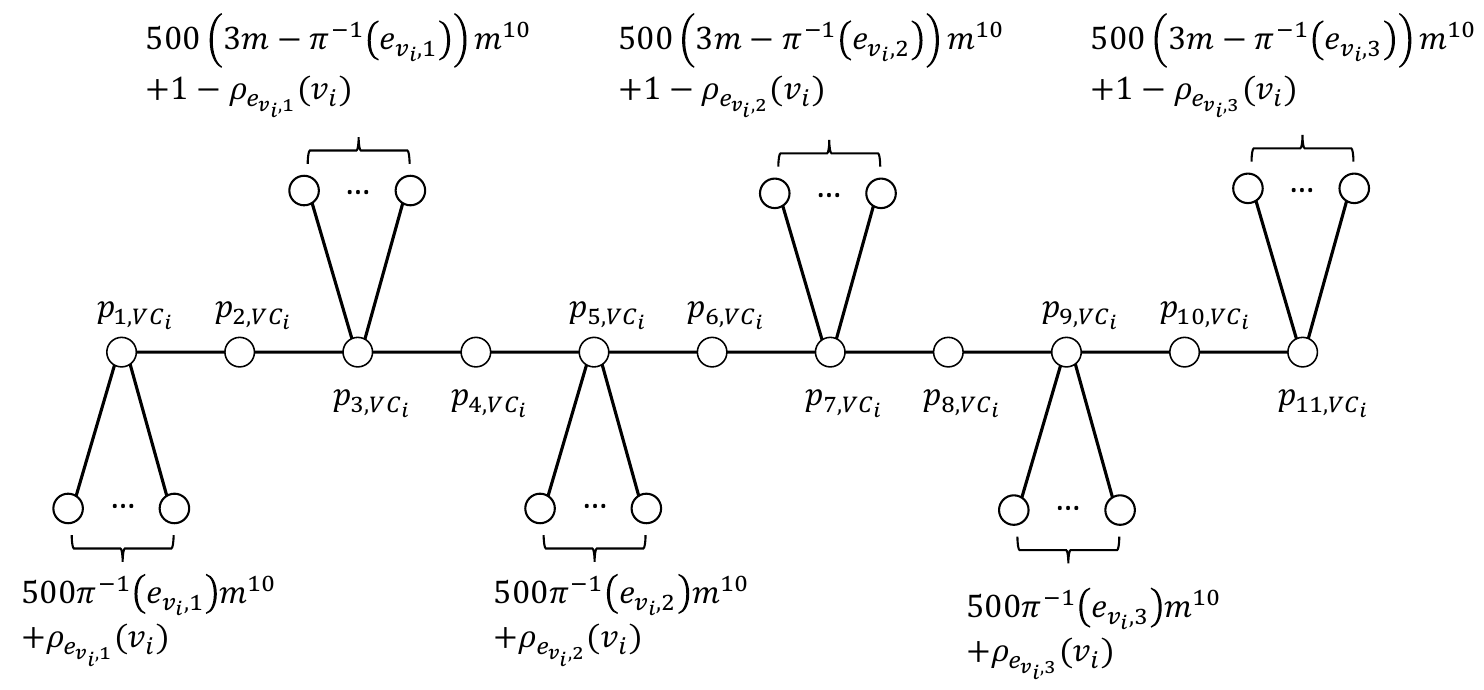}
  \caption{An example of the vertex-cover gadget $VC_i$.}
  \label{fig:2}
\end{figure}

\paragraph*{Edge Gadget}
The edge gadget $Ed_i$ is constructed as follows.
\begin{itemize}
    \item \sloppy{Prepare two paths (spines) $P_a=\{p_{1,a},p_{2,a},\dots,p_{5,a}\}$ and $P_b= \{p_{1,b} , p_{2,b} , \dots , p_{5,b}\}$ on $5$ vertices.}
    \item For each $x\in\{a,b\}$, the numbers of attached leaves of the vertices $p_{2,x}$ and $p_{4,x}$ are set to $0$. In addition, the number of attached leaves of the vertex $p_{3,x}$ is set to $m-2$.
    \item The numbers of attached leaves of the vertices $p_{1,a}$ and $p_{5,a}$ are set to $500im^{10}$ and $500im^{10}+1$, respectively.
    \item The numbers of attached leaves of the vertices $p_{1,b}$ and $p_{5,b}$ are set to $500(3m-i)m^{10}$ and $500(3m-i)m^{10}+1$, respectively.
\end{itemize}
We denote by $Ed_{i,a}$ the caterpillar graph consisting of the spine $P_a$ and its attached leaves, and by $Ed_{i,b}$ the caterpillar graph consisting of the spine $P_b$ and its attached leaves.

An illustration of the edge gadget is shown in Figure~\ref{fig:3}. Figure~\ref{fig:3}(a) represents $Ed_{i,a}$, and Figure~\ref{fig:3}(b) represents $Ed_{i,b}$. Hereafter, for a vertex $p_j$ contained in the edge gadget $Ed_{i,x}$, we write it as $p_{j,x,Ed_i}$, where $x\in\{a,b\}$.
\begin{figure}[t]
  \centering
  \includegraphics[width=1.0\textwidth]{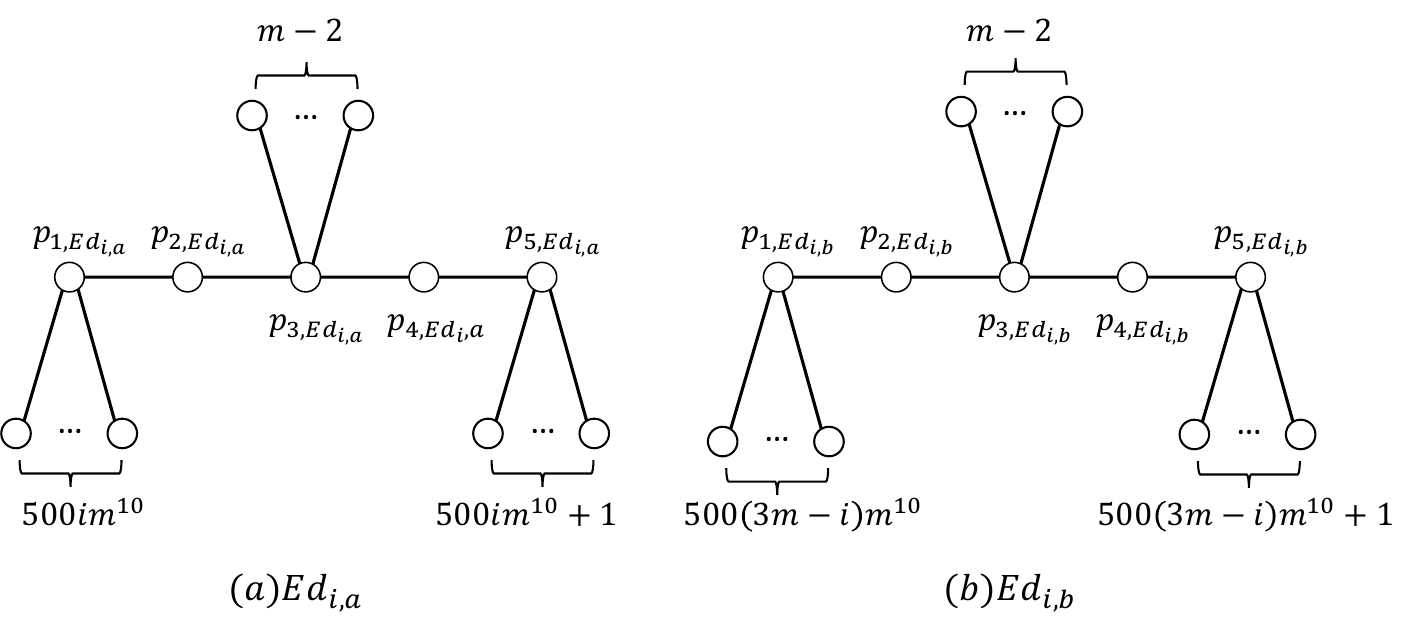}
  \caption{An example of the edge gadget $Ed_i$. (a) represents $Ed_{i,a}$, and (b) represents $Ed_{i,b}$.}
  \label{fig:3}
\end{figure}

\paragraph*{Independent-Set Padding Gadget}
The independent-set padding gadget $MS_i$ is constructed as follows.
\begin{itemize}
    \item Prepare a path (spine) $P=\{p_1,p_2\}$ on $2$ vertices.
    \item Let $\sizeMS=(500\cdot 3m\cdot m^{10}+2m+3)\cdot 3+6$.
    The number of attached leaves of the vertex $p_1$ is set to $\sizeMS-2=(500\cdot 3m\cdot m^{10}+2m+3)\cdot 3+4$.
    \item The number of attached leaves of the vertex $p_2$ is set to $0$.
\end{itemize}

An illustration of the independent-set padding gadget is shown in Figure~\ref{fig:4}.

\begin{figure}[t]
  \centering
  \includegraphics[width=0.7\textwidth]{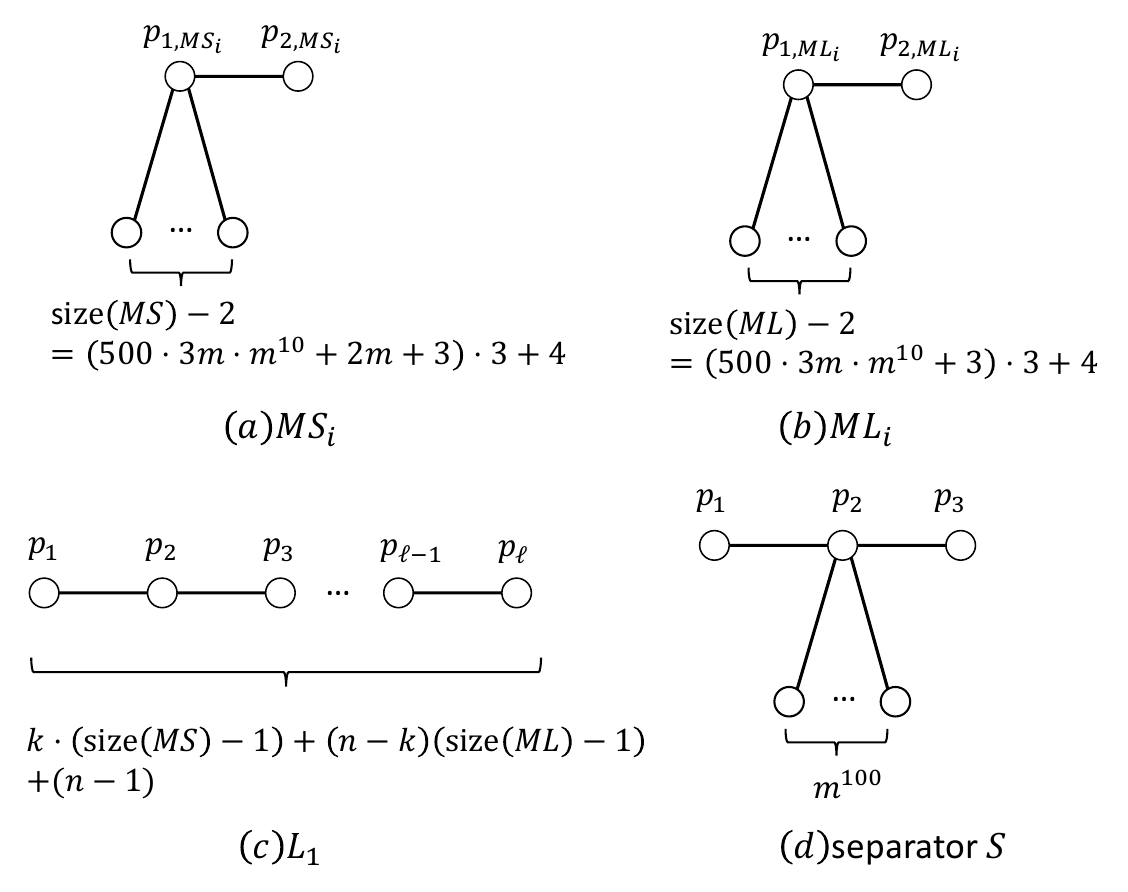}
  \caption{Examples of (a) the independent-set padding gadget $MS_i$, (b) the vertex-cover padding gadget $ML_i$, (c) the primary gadget, and (d) the separator. All independent-set padding gadgets are isomorphic, regardless of the value of $i$, and all vertex-cover padding gadgets $ML_i$ with $i \in \{1,\dots,n-k-1\}$ are isomorphic; note that $ML_{n-k}$ has a different form.
  }
  \label{fig:4}
\end{figure}

\paragraph*{Vertex-Cover Padding Gadget}
The vertex-cover padding gadget $ML_i$ is constructed as follows.
\begin{itemize}
    \item Prepare a path (spine) $P=\{p_1,p_2\}$ on $2$ vertices.
    \item Let $\sizeML=(500 \cdot 3m \cdot m^{10}+3)\cdot 3+6$.
    The number of attached leaves of the vertex $p_1$ is set to $\sizeML-2=(500 \cdot 3m \cdot m^{10}+3)\cdot 3+4$.
    \item The number of attached leaves of the vertex $p_2$ is set to $0$.
\end{itemize}

An illustration of the vertex-cover padding gadget is shown in Figure~\ref{fig:4}.
Hereafter, we collectively refer to the independent-set padding gadgets and the vertex-cover padding gadgets as padding gadgets.

\paragraph*{Primary Gadget}
The primary gadget $L_1$ is constructed as follows.
\begin{itemize}
    \item Let $
        \ell = k\cdot (\sizeMS-1)+(n-k)\cdot (\sizeML-1) +(n-1) = \left( ( 500 \cdot 3m \cdot m^{10} + 2m+3) \cdot  3 + 5 \right ) \cdot k + \left( (500\cdot 3m \cdot m^{10}+3) \cdot 3+5 \right) \cdot (n-k)+(n-1)$.
    Prepare a path (spine) $P=\{p_1,p_2,\dots,p_{\ell+1}\}$ of length $\ell$.
\end{itemize}

The length of the primary gadget is exactly equal to the total number of vertices in the independent-set padding gadgets and the vertex-cover padding gadgets minus $1$. An illustration of the primary gadget is shown in Figure~\ref{fig:4}.

\paragraph*{Separator}
The separator $S$ is constructed as follows.
\begin{itemize}
    \item Prepare a path (spine) $P=\{p_1,p_2,p_3\}$ on $3$ vertices.
    \item The number of attached leaves of the vertex $p_2$ is set to $m^{100}$.
\end{itemize}

An illustration of the separator is shown in Figure~\ref{fig:4}.

\subsection{Whole Construction}\label{sec:whole}
Each gadgets are connected to each other, or through separators, at their spine endpoints.
The overall construction is shown in Figure~\ref{fig:7}. For simplicity, the figure omits individual spine and leaf vertices and represents each gadget only by its name. The construction is described below.

\begin{figure}[t]
\centering
\includegraphics[width=1.0\textwidth]{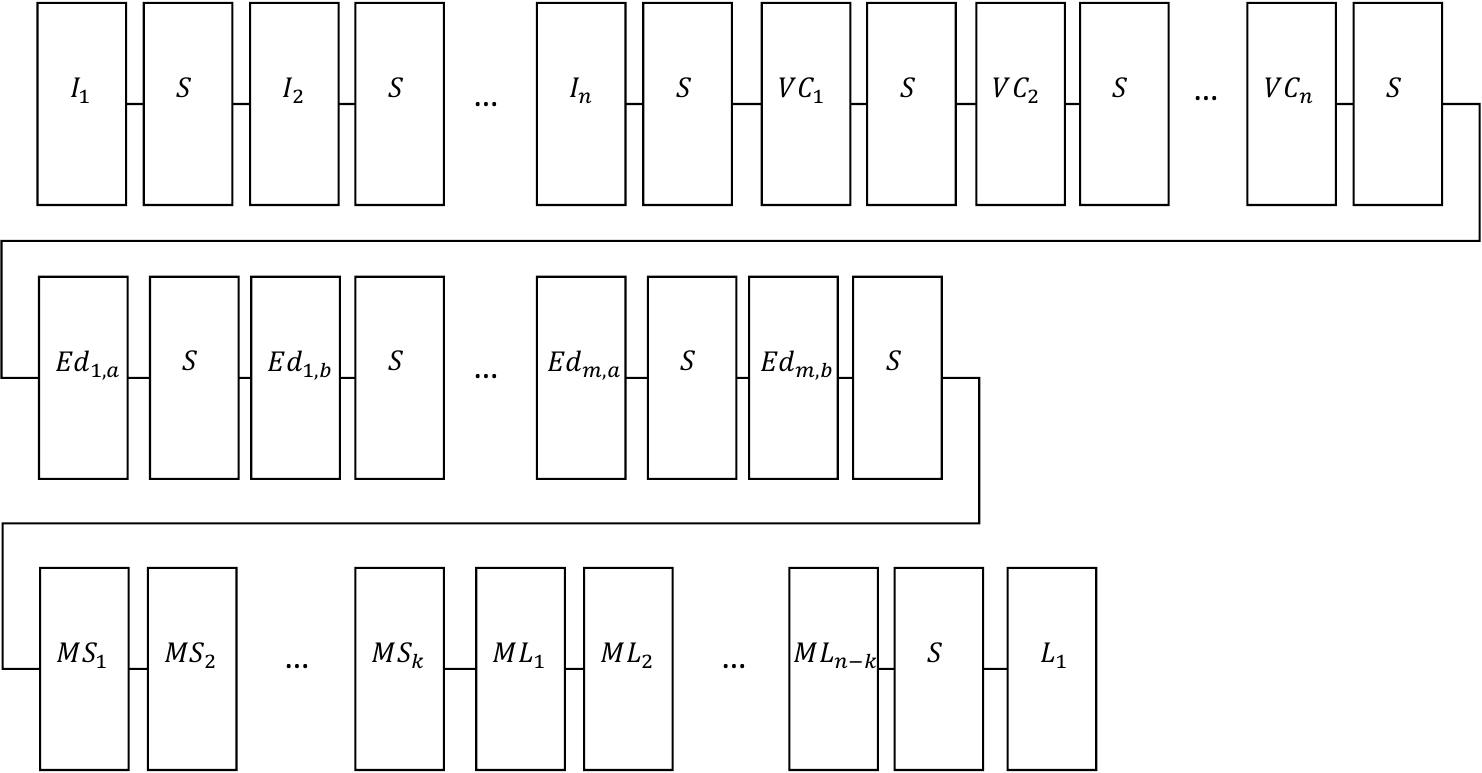}
\caption{The overall construction of the graph.}
\label{fig:7}
\end{figure}

\begin{itemize}
    \item Connect the $n$ independent-set gadgets in sequence. For each pair of consecutive gadgets, insert a separator gadget and connect an endpoint of the spine of one independent-set gadget to an endpoint of the spine of the separator gadget.
    \item Connect the $n$ vertex-cover gadgets in sequence. For each pair of consecutive gadgets, insert a separator gadget and connect an endpoint of the spine of one vertex-cover gadget to the spine of the separator gadget. Also insert a separator gadget between $I_n$ and $VC_1$, and connect the spine endpoints of these gadgets to the spine of the separator gadget.
    \item For each edge gadget $Ed_i$, insert a separator gadget between $Ed_{i,a}$ and $Ed_{i,b}$ and connect their spine endpoints ($1\leq i\leq m$). Then connect the $m$ edge gadgets in sequence. For each pair of consecutive edge gadgets, insert a separator gadget and connect an endpoint of the spine of one edge gadget to the spine of the separator gadget. Also insert a separator gadget between $VC_n$ and $Ed_{1,a}$, and connect the spine endpoints of these gadgets to the spine of the separator gadget.
    \item Connect the $k$ independent-set padding gadgets in sequence by connecting consecutive gadgets through their spine endpoints. In addition, insert a separator gadget between $Ed_m$ and $MS_1$, and connect the spine endpoints of these gadgets to the spine of the separator gadget.
    \item Connect the $n-k$ vertex-cover padding gadgets in sequence by connecting consecutive gadgets through their spine endpoints. Also connect the spine endpoint of $MS_k$ to the spine endpoint of $ML_1$.
    \item Insert a separator gadget between $ML_{n-k}$ and the primary gadget $L_1$, and connect the endpoint of each gadget to the spine of the separator gadget.
\end{itemize}

Next, we describe the positions of the connected components before reconfiguration.

\begin{itemize}
    \item For each independent-set gadget $I_i$ ($1\leq i\leq n$), place one connected component on each spine vertex in $\{p_1,p_3,p_5,p_7,p_9,p_{11}\}$ together with all leaves attached to that vertex.
    \item For each vertex-cover gadget $VC_i$ ($1\leq i\leq n$), place one connected component on each spine vertex in $\{p_1,p_3,p_5,p_7,p_9,p_{11}\}$ together with all leaves attached to that vertex.
    \item For each independent-set padding gadget $MS_i$ ($1\leq i\leq k$), place one connected component whose vertex set consists of the spine vertex $p_1$ and all leaves attached to $p_1$.
    \item For each vertex-cover padding gadget $ML_i$, place one connected component whose vertex set consists of the spine vertex $p_1$ and all leaves attached to $p_1$.
    \item \begin{sloppypar}Place one connected component in the primary gadget so that it occupies the entire spine of the gadget. Its size is
    \begin{align*}
        &\left( ( 500 \cdot 3m \cdot m^{10} + 2m+3) \cdot 3 + 5 \right) \cdot k
        \\
        &+ \left( (500\cdot 3m \cdot m^{10}+3) \cdot 3+5 \right) \cdot (n-k)
        + (n-1)\\
        =& k\cdot(\sizeMS-1)+(n-k)\cdot(\sizeML-1) +(n-1).
    \end{align*}
    \end{sloppypar}
    \item Place one connected component in each separator gadget $S$, each of size $m^{100}+3$.
\end{itemize}

Next, we describe the positions of the connected components after reconfiguration.

\begin{itemize}
    \item The connected components in the $n$ independent-set gadgets remain in their original positions.
    \item The connected components in the $n$ vertex-cover gadgets remain in their original positions.
    \item Place the connected components that were originally in the $k$ independent-set padding gadgets on the spine of the primary gadget so that no two of them share a vertex or are adjacent. Each such connected component has size $(500\cdot 3m\cdot m^{10}+2m+3)\cdot 3+5 =\sizeMS -1$.
    \item Place the connected components that were originally in the $n-k$ vertex-cover padding gadgets on the spine of the primary gadget so that no two of them share a vertex or are adjacent. Each such connected component has size $(500\cdot 3m\cdot m^{10}+3)\cdot 3+5 = \sizeML - 1$.
    \item Place the connected component that was originally in the primary gadget $L_1$ so that it spans the spines of all independent-set padding gadgets and vertex-cover padding gadgets.
    \item The connected components in the separator gadgets $S$ remain in their original positions.
\end{itemize}

\section{Proof of Correctness}
In this section, we prove the following two lemmas in order to show that the instance $\Phi'$ constructed in the previous section yields a correct reduction.
\begin{lemma}
\label{lma:yes}
For an instance $\Phi=(G,k)$ of the Independent Set problem on a $3$-regular graph, let $\Phi'=(G',A,B)$ be the instance of \CCRCJ~constructed in Section~\ref{sec:whole}. If $G$ has an independent set of size $k$, then $A$ can be reconfigured into $B$ in $G'$.
\end{lemma}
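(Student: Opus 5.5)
The plan is to build an explicit reconfiguration sequence from a fixed independent set $S\subseteq V(G)$ with $|S|=k$. The key observation is a size identity. The whole independent-set gadget $I_i$ has $3(500\cdot 3m\cdot m^{10}+2m+1)+11=\sizeMS-1$ vertices. The whole vertex-cover gadget $VC_i$ has $3(500\cdot 3m\cdot m^{10}+1)+11=\sizeML-1$ vertices. So a padding component from $MS_j$ fits exactly into an emptied $I_i$, and one from $ML_j$ fits exactly into an emptied $VC_i$. Accordingly, we will temporarily empty $I_i$ for each $v_i\in S$ and $VC_i$ for each $v_i\notin S$. This gives exactly $k$ holes of the $MS$ type and $n-k$ holes of the $ML$ type.

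\textbf{Step 1 (parking the gadget components).} The emptied gadgets contribute $3n$ components of the form $p_{4j-3}$ plus its leaves, and $3n$ of the form $p_{4j-1}$ plus its leaves. Each such component is indexed by an edge $e_{v_i,j}$ with $\pi^{-1}(e_{v_i,j})=t$. We send the $p_{4j-3}$-type component to $Ed_{t,a}$ and the $p_{4j-1}$-type component to $Ed_{t,b}$. Every edge $e_t$ has two endpoints, and exactly one gadget of each endpoint is emptied, so each of $Ed_{t,a}$ and $Ed_{t,b}$ receives exactly two components. These are one from the tail side ($\rho=0$) and one from the head side ($\rho=1$). Because $S$ is independent, the two endpoints of $e_t$ are not both in $S$. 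Hence at most one of the two parked components carries the extra $+m$ coming from an $I$-gadget. We then check that $Ed_{t,a}$ (spine $p_1,\dots,p_5$, with $500tm^{10}$, $0$, $m-2$, $0$, $500tm^{10}+1$ leaves) can host two nonadjacent connected sets of the required sizes, and similarly for $Ed_{t,b}$. One set sits around $p_1$, the other around $p_5$, and the surplus is absorbed by the $m-2$ leaves of $p_3$ and by $p_2,p_4$. The $\rho$ offsets are exactly what makes the tail/head assignment fit at the two ends.

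\textbf{Step 2 (swapping padding and primary).} Next, each $MS$ component jumps into an emptied $I_i$ and each $ML$ component jumps into an emptied $VC_i$, so every padding gadget becomes empty. The primary component has size $\ell$, one less than the total number of vertices in the padding region. It then jumps onto a connected set covering all padding spines and leaves except one boundary vertex, chosen so that it is not adjacent to the separator component. Since $\ell=k(\sizeMS-1)+(n-k)(\sizeML-1)+(n-1)$, the now-empty primary spine of $\ell+1$ vertices is exactly long enough for all $n$ padding components with $n-1$ single-vertex gaps. We therefore move them onto it one by one, leaving the vertex next to the separator empty as well; this requires checking one unit of slack. Finally, we run Step 1 in reverse, returning each parked component to its original place in $I_i$ or $VC_i$, which yields $B$.

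The main obstacle I expect is the verification of nonadjacency at gadget boundaries. Each separator component occupies essentially its whole separator, and a component filling an entire $I_i$, $VC_i$ or padding region would touch the separator spine. I would first recheck exactly which spine vertices are joined. If the sizes still force adjacency, I would first try shifting the relevant separator component by one spine vertex along its own gadget before each critical move and restoring it afterwards. The second delicate point is the exact leaf arithmetic inside the edge gadgets in Step 1. There I would do a case check over (tail in $S$, head not), (head in $S$, tail not), and (neither), confirming that the two parked sets fit in each case.
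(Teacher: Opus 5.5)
Your proposal follows the paper's proof essentially step for step. The paper parks each component of $I_i$ ($v_i\in S$) on $\{p_1,p_2,p_3\}$ or $\{p_3,p_4,p_5\}$ (with leaves) of $Ed_{t,a}$ or $Ed_{t,b}$, and each component of $VC_i$ ($v_i\notin S$) on the single end vertex $p_1$ or $p_5$ (with leaves), with $\rho$ deciding the end; it then fills the emptied gadgets with the padding components using the same size identities, exchanges the primary component with the padding components, and undoes the parking.

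The boundary-adjacency concern you raise is ignored by the paper too, but it is a defect of the construction as literally written, not of the sequence: each separator component fills its whole gadget, so occupied gadget endpoints already touch it in $A$, and the padding region has separators on both sides. It therefore has to be fixed by changing the construction (e.g.\ an unoccupied buffer vertex next to each separator), not by shifting separator components mid-sequence.
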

\begin{lemma}
\label{lma:no}
For an instance $\Phi=(G,k)$ of the Independent Set problem on a $3$-regular graph, let $\Phi'=(G',A,B)$ be the instance of \CCRCJ~constructed in Section~\ref{sec:whole}. If $G$ has no independent set of size $k$, then $A$ cannot be reconfigured into $B$ in $G'$.
\end{lemma}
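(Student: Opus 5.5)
The plan is to prove the contrapositive: from any reconfiguration sequence from $A$ to $B$, extract an independent set of size $k$ in $G$. The key is to fix one critical moment: the first step at which the component of the primary gadget $L_1$ leaves its initial position. By construction it can only go to a region where a connected set of $\ell+1$ vertices is free, and the intended region is the padding block. We analyse the configuration just before this step.

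\textbf{Step 1: rigidity.} First we show that the separator components of size $m^{100}+3$ never move, since $m^{100}$ dominates every other weight. Hence every component stays inside the separator-delimited region it can reach, and the only free vertices are inside the gadgets. Using the very large $L_1$ component, we also argue that at the critical moment the primary region and the padding block can host nothing except the moving $L_1$ component. It follows that all $k$ independent-set padding components (size $\sizeMS-1$) and all $n-k$ vertex-cover padding components (size $\sizeML-1$) are parked inside the independent-set, vertex-cover and edge gadget regions.

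\textbf{Step 2: where the padding components sit.} Every $I_i$ region has exactly $3(500\cdot 3m\cdot m^{10}+2m+3)+5=\sizeMS-1$ vertices, and every $VC_i$ region has exactly $\sizeML-1$ vertices. An edge half $Ed_{i,x}$ cannot host a connected set of either size once its neighbouring separator components are accounted for. Comparing weights (the $500\cdot 3m\cdot m^{10}$ term fixes the scale and the $+2m$ term separates $\sizeMS$ from $\sizeML$), we show:
\begin{itemize}
\item each independent-set padding component occupies an entire $I_i$ region, which can happen for only $k$ distinct indices $i$ in total;
\item each vertex-cover padding component occupies an entire $I_i$ or $VC_i$ region.
\end{itemize}
Hence at least $k$ independent-set gadgets are completely vacated; let $S\subseteq V$ be the corresponding set of $k$ vertices of $G$.

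\textbf{Step 3: $S$ is independent.} The six components originally in each vacated $I_i$ (and in any vacated $VC_i$) must sit elsewhere, and the only remaining room is in the edge gadgets and in other vacated gadgets. Two features of the sizes drive the argument.
\begin{itemize}
\item The coefficient $500\pi^{-1}(e)m^{10}$ forces a component coming from the slot for edge $e_i$ to lie in $Ed_{i,a}$ (the $p_{4j-3}$ components) or $Ed_{i,b}$ (the $p_{4j-1}$ components). The complementary coefficients $500(3m-i)m^{10}$ make this a perfect pairing.
\item The $+m$ offsets of independent-set components, versus the $m-2$ leaves at $p_{3,x}$, together with the $\rho$ bits that distinguish tail from head, imply that each half $Ed_{i,x}$ can absorb at most one independent-set component.
\end{itemize}
So if two vacated gadgets $I_u$ and $I_v$ had $uv=e_i\in E$, both would need room in $Ed_{i,a}$, a contradiction. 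A global counting argument over the total free space (the construction is tight, since $\ell$ equals the padding volume minus one) rules out parking these components in vacated vertex-cover gadgets instead. Thus $S$ is an independent set of size $k$.

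The main obstacle is Step 3, together with the fit claims in Step 2. They require a careful enumeration of which connected sets of a given size fit in a caterpillar region with leaf counts of the form $500 i m^{10}+O(m)$, including the non-adjacency constraint to neighbouring components. I would first reduce every size to a pair (coefficient of $500m^{10}$, residual $O(m)$ part), show the first coordinate matches exactly, and then do the finite residual case analysis that uses $\rho$ and the $m-2$ middle leaves.
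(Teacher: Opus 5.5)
Your outline follows the paper's proof closely: separators are frozen, you look at the moment the primary component enters the padding block, the $k$ independent-set padding components each fill a whole $I_i$, the displaced $p_{4j-3}$-components of those gadgets are forced into edge halves, and two of them cannot share one half. However, two steps fail as stated. First, the coefficient matching needs an explicit bound on the free space inside the independent-set, vertex-cover and edge gadgets at the critical moment. Without one, nothing stops a component from sitting on a heavy spine vertex whose coefficient exceeds its own and leaving $\Omega(m^{10})$ of that vertex's leaves empty. You invoke \emph{global counting} only to exclude parking in vertex-cover gadgets. The reason you give ($\ell$ is the padding volume minus one) only shows that the padding block must be empty; it says nothing about slack in the gadget region. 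What is needed is the count of Lemma~\ref{lma:5}. The total number of vertices of these gadgets, minus the total size of all components that must live there, equals $2m^2+\frac{16}{3}m-6mk=O(m^2)$. The leading terms cancel precisely because $2m=3n$.

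Second, even with that bound, your claim that the coefficient forces a displaced $p_{4j-3}$-component for edge $e$ into $Ed_{e,a}$ is false. A connected set may contain two heavy spine vertices whose coefficients add up. Suppose $e$ has even index $2q$. The component then has size $1000qm^{10}+m+1+\rho$, while $Ed_{q,a}$ has $1000qm^{10}+m+4$ vertices (its heavy vertices $p_{1,a},p_{5,a}$ carry $500qm^{10}$ and $500qm^{10}+1$ leaves). So the component fits into $Ed_{q,a}$ with at most three vertices left empty. If both endpoints of $e$ lie in $S$, one displaced component can use $Ed_{2q,a}$ and the other $Ed_{q,a}$, and your pigeonhole step (both would need room in $Ed_{e,a}$) does not follow. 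The other endpoint's slot for the same edge in its independent-set gadget is also size-compatible (off by one), but that case is harmless here, since that gadget is full. The paper's Lemma~\ref{lma:6} has the same hole: its assertion that containing two heavy spine vertices always wastes $\Omega(m^{10})$ vertices fails for $p_{1,a},p_{5,a}$ of a single edge half. So you cannot simply import it. You need either an additional argument tracking where the components of edge $q$ then go, or a reweighting in which no coefficient is a sum of two others (e.g.\ all heavy coefficients lying in an interval $(X,2X]$).
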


Theorem~\ref{thm:main} follows immediately from the two lemmas above. In what follows, we prove Lemmas~\ref{lma:yes} and~\ref{lma:no}.

\subsection{Proof of Lemma~\ref{lma:yes}}
\begin{proof}[Proof of Lemma~\ref{lma:yes}]
Let $T$ be an independent set of size $k$ in the instance $\Phi$. If there are multiple such independent sets, choose one arbitrarily.
We show that $A$ can be reconfigured into $B$ by exchanging the connected component in the primary gadget with the connected components in the padding gadgets.

\begin{enumerate}
    \item For each vertex $v_i$, proceed as follows.
    \begin{itemize}
        \item If $v_i\in T$, then for each $j\in\{1,2,3\}$, move the connected component placed on $p_{4j-3}$ of $I_i$ together with its attached leaves to the vertex set consisting of $p_1,p_2,p_3$ and their attached leaves in $Ed_{\pi^{-1}(e_{v_i,j}),a}$ when $\rho_{e_{v_i,j}}(v_i)=0$, and to the vertex set consisting of $p_3,p_4,p_5$ and their attached leaves in $Ed_{\pi^{-1}(e_{v_i,j}),a}$ when $\rho_{e_{v_i,j}}(v_i)=1$. Also, move the connected component placed on $p_{4j-1}$ of $I_i$ together with its attached leaves to the vertex set consisting of $p_3,p_4,p_5$ and their attached leaves in $Ed_{\pi^{-1}(e_{v_i,j}),b}$ when $\rho_{e_{v_i,j}}(v_i)=0$, and to the vertex set consisting of $p_1,p_2,p_3$ and their attached leaves in $Ed_{\pi^{-1}(e_{v_i,j}),b}$ when $\rho_{e_{v_i,j}}(v_i)=1$.

        \item If $v_i\notin T$, then for each $j\in\{1,2,3\}$, move the connected component placed on $p_{4j-3}$ of $VC_i$ together with its attached leaves to the vertex set consisting of $p_1$ and its attached leaves in $Ed_{\pi^{-1}(e_{v_i,j}),a}$ when $\rho_{e_{v_i,j}}(v_i)=0$, and to the vertex set consisting of $p_5$ and its attached leaves in $Ed_{\pi^{-1}(e_{v_i,j}),a}$ when $\rho_{e_{v_i,j}}(v_i)=1$. Also, move the connected component placed on $p_{4j-1}$ of $VC_i$ together with its attached leaves to the vertex set consisting of $p_5$ and its attached leaves in $Ed_{\pi^{-1}(e_{v_i,j}),b}$ when $\rho_{e_{v_i,j}}(v_i)=0$, and to the vertex set consisting of $p_1$ and its attached leaves in $Ed_{\pi^{-1}(e_{v_i,j}),b}$ when $\rho_{e_{v_i,j}}(v_i)=1$.
    \end{itemize}
    In every case, the destination vertex set is connected and has exactly the same size as the moved connected component. Hence all these moves are valid.

    \item For each vertex $v_i$, proceed as follows.
    \begin{itemize}
        \item If $v_i\in T$, move one connected component from an independent-set padding gadget so that it occupies all vertices of $I_i$. The moved connected component has size $\sizeMS-1$, and the gadget $I_i$ also contains exactly $\sizeMS-1$ vertices.
        \item If $v_i\notin T$, move one connected component from a vertex-cover padding gadget so that it occupies all vertices of $VC_i$. The moved connected component has size $\sizeML-1$, and the gadget $VC_i$ also contains exactly $\sizeML-1$ vertices.
    \end{itemize}
    Therefore, every move in this step is possible.

    \item Move the connected component in the primary gadget $L_1$ to the padding padding spine region. Its size is $k(\sizeMS-1) + (n-k)(\sizeML-1) + (n-1)$,
    which is exactly the number of spine veritices in the padding vertices in the padding gadgets. Hence this move is possible.

    \item Move the connected components that were initially placed in the padding gadgets into the now-empty primary gadget $L_1$. Their total size is
    $k(\sizeMS-1) + (n-k)(\sizeML-1)$,
    and they require $n-1$ empty vertices between consecutive connected components. Thus, the total number of required vertices is
    $k(\sizeMS-1) + (n-k)(\sizeML-1) + (n-1)$,
    which is exactly the size of $L_1$. Hence this step is also possible.

    \item Return every connected component moved in Step~1 from the edge gadgets to its original position.
\end{enumerate}

After these steps, the resulting configuration is exactly $B$. Therefore, $A$ can be reconfigured into $B$ in $G'$, as claimed.
\end{proof}

\subsection{Proof of Lemma~\ref{lma:no}}
In this section, we prove Lemma~\ref{lma:no}. We use the following auxiliary lemmas.
\begin{lemma}
\label{lma:4}
Under the reconfiguration rule of \CCRCJ, a connected component on a separator gadget cannot be moved to a position that does not contain the spine vertex it contains before reconfiguration.
\end{lemma}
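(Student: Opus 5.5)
The plan is to show that whenever a separator component (of size $m^{100}+3$) is moved, its new position must be a connected vertex set of size $m^{100}+3$ in $G'$. Then I will argue, by a counting/degree argument on the caterpillar, that every such set contains the central spine vertex $p_2$ of the very same separator it currently occupies. The key structural fact is that $G'$ is a tree whose only vertices of huge degree are the separator centres: every separator centre has exactly $m^{100}$ leaves. Every other spine vertex has at most $500\cdot 3m\cdot m^{10}+O(m)$ leaves, since that is the bound for the independent-set, vertex-cover and edge gadgets. The padding gadgets and the primary gadget have at most $O(m^{12})$ vertices in total. The total number of vertices of $G'$ outside the separators is therefore $O(n\cdot m^{12}) = O(m^{13})$, which is much smaller than $m^{100}$ because $n\le m$ in a 3-regular graph.

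The main step is as follows. Let $C'$ be a connected set of size $m^{100}+3$. Since the non-separator vertices number only $O(m^{13})$, $C'$ must contain at least one leaf of some separator. It must in fact contain many such leaves, because the spine vertices of separators number only $O(m)$. A leaf of a separator $S$ has degree one. So if $C'$ contains a leaf of $S$ and $|C'|\ge 2$, connectivity forces $C'$ to contain the centre $p_{2,S}$.

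It remains to show that this $S$ must be the separator that the moving component currently occupies. Suppose $C'$ contains the centre of a different separator $S'$. Every other separator's component remains in place during the move; this is the crucial point, and it holds because only one component moves per step. That component occupies $p_{2,S'}$, so $C'$ would share a vertex with it, which is illegal. Hence every separator centre contained in $C'$ is the centre of the separator $S$ whose component is being moved. Combined with the previous paragraph, $C'$ contains $p_{2,S}$.

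To make the step ``$C'$ contains a separator leaf'' precise, I will argue as follows. Consider the leaves of $C'$ lying in separators other than $S$. Each of them would force $C'$ to contain that other separator's centre, which we just excluded. So every separator vertex in $C'$ belongs to $S$ or is a spine endpoint of some other separator. That gives at most $O(m)$ vertices from other separators, plus $O(m^{13})$ non-separator vertices. The remaining count, $m^{100}+3-O(m^{13})>0$, must then come from the leaves of $S$, which forces $p_{2,S}\in C'$.

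The step I expect to be the main obstacle is the careful bookkeeping in the vertex count. In particular, the primary gadget $L_1$ has length $\Theta(n m^{12})$, and I will have to verify explicitly that this is still $o(m^{100})$ and that the total non-separator mass stays below $m^{100}$. This needs a mild assumption such as $m\ge 2$, since for tiny constant instances the bound can be checked directly or those instances can be handled separately.
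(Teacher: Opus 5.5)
Your argument is correct and is essentially the paper's proof: other separators are blocked by their own components, and all non-separator vertices together are polynomially far fewer than $m^{100}$. Your $O(m^{12})$ bound for the primary and padding region is in fact more accurate than the paper's stated $O(m^{11})$, and counting all non-separator vertices avoids needing the separators to disconnect the graph. The one thing to make explicit is the induction: that each other separator's component still contains $p_{2,S'}$ does not follow from ``only one component moves per step'' alone, but from the invariant holding after every previous move, which the paper secures by considering the first violating move.
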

\begin{proof}
We prove this by contradiction. Suppose that such a move is possible; that is, some connected component on a separator gadget can be moved to a position that does not contain the spine vertex it contains before reconfiguration. Let $S_1$ be the first connected component moved in this way.
Since $S_1$ is the first such connected component, every other connected component on a separator gadget still contains its original spine vertex at the moment when $S_1$ is moved. Hence no other separator gadget has enough available vertices to accommodate $S_1$. Therefore, if $S_1$ is moved to a position that does not contain its original spine vertex, then it must be moved into a region outside the separator gadgets.
On the other hand, because there is a separator gadget between every pair of gadgets except for the padding gadgets, no connected component can move so as to span multiple gadgets. Thus, after removing all vertices of the separator gadgets from the caterpillar graph, the size of the largest connected vertex set in the remaining graph is at most $O(m^{11})$. This is overwhelmingly smaller than the size $m^{100}$ of a connected component on a separator gadget. Hence $S_1$ cannot be placed in any connected vertex set outside the separator gadgets.
Therefore, there is no position to which $S_1$ can be moved without containing the spine vertex it contains before reconfiguration. This is a contradiction.
\end{proof}

\begin{definition}[Empty Gadget]
Let $V_X$ be the vertex set of an arbitrary gadget $X$, and let $V_c$ be the vertex set of a connected component $c \in C(U)$. For any $c\in C(U)$, if $V_X \cap V_c = \emptyset$, then we say that the gadget $X$ is empty.
\end{definition}
Let $\gamma$ denote a configuration in which the primary gadget is moved into the padding gadgets. If $A$ can be reconfigured into $B$, then such a configuration must appear.  In configuration $\gamma$, moving the primary gadget there requires the padding gadgets to be empty. The next lemma shows that the empty independent-set gadgets in $\gamma$ encode an independent set.

\begin{lemma}
\label{lma:5}
In the configuration $\gamma$, the total number of empty vertices contained in the independent-set gadgets, the vertex-cover gadgets, and the edge gadgets is at most $O(m^{2})$.
\end{lemma}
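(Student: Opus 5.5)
The approach is a volume-accounting argument in $\gamma$, refined by a packing argument inside each separated piece of the graph. By Lemma~\ref{lma:4}, every separator component still contains its original spine vertex. A separator has exactly $m^{100}+3$ vertices and its component has exactly that size, so every separator is completely occupied. The components of $\gamma$ therefore split into three groups: the separator components, the primary component, and the remaining components (the $6\cdot 2n$ gadget components of the $I_i$ and $VC_i$, together with the $n$ padding components). The primary component has size $\ell+1=k\cdot\sizeMS+(n-k)\cdot\sizeML$, which is exactly the number of vertices of the padding region, so in $\gamma$ it fills that region completely. Hence every remaining component lies inside a single piece $X$ of the graph obtained by deleting the separators and the padding region. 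The possible pieces are $L_1$, one of the $I_i$ or $VC_i$, or one of the halves $Ed_{i,a},Ed_{i,b}$.

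Next I would pin down where each kind of component can go, by size. A padding component has size $\approx 4500m^{11}$. This is larger than any edge-gadget half, each of which has fewer than $1000m^{11}+O(m)$ vertices. So every padding component lies in $L_1$ or fills all but $O(m)$ vertices of some $I_i$ or $VC_i$. In the latter case that gadget can hold nothing else, because there is no room for a second component together with a separating gap. Similarly, each gadget component has size $500\,t\,m^{10}+O(m)$ with $1\le t<3m$. So every component size, and every piece size, is a multiple of $500m^{10}$ plus an $O(m)$ error.

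I would then write the global balance. Let $R$ be the union of the vertex sets of the independent-set, vertex-cover and edge gadgets, and let $\mathrm{vol}$ denote the total size of the non-separator, non-primary components. Then
\[
|\text{empty}(R)| \;=\; |R|+|L_1| - \mathrm{vol} - |\text{empty}(L_1)|.
\]
The term $|R|+|L_1|-\mathrm{vol}$ is an explicit quantity, namely the total size of the edge gadgets plus $O(n)$. So it suffices to show that $L_1$ contains empty vertices equal to the total edge-gadget volume up to $O(m^2)$. Equivalently, it suffices to work piece by piece and show that each of the $O(m)$ pieces inside $R$ has only $O(m)$ empty vertices.

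For a single piece I would reduce modulo $500m^{10}$. The components placed in a piece have total size $500m^{10}\cdot T+O(m)$, where $T$ is the sum of their leading coefficients. The piece has size $500m^{10}\cdot T'+O(m)$. Since each piece holds only $O(1)$ components, the $O(m)$ error terms are genuinely smaller than $m^{10}$, and so the leading parts determine the fill level. The main obstacle is to exclude a piece being left short by a whole multiple of $500m^{10}$; such a deficit would amount to $\Theta(m^{10})$ empty vertices in $R$. To rule this out, I would combine the per-piece statements with the global balance above. The total slack $|R|+|L_1|-\mathrm{vol}$ has leading part exactly $\sum_i 500\cdot 3m\cdot m^{10}\cdot 2$, which is the edge-gadget volume. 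The leading coefficients also match: the edge halves $Ed_{i,a}$ and $Ed_{i,b}$ have coefficients $i$ and $3m-i$, which are exactly the coefficients of the gadget components. I would then argue that any shortfall in $R$ must be compensated by $L_1$ holding more than its exact capacity. This contradicts the fact that $L_1$ is a path of length $\ell$ that already needs one gap between consecutive components. I expect this last exchange between the slack in $L_1$ and the deficits in $R$ to be the delicate step, and I would attack it first by counting leading coefficients on both sides.
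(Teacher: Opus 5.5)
Your proposal has a genuine gap, and it sits exactly at the step you call delicate. You allow padding components to lie in $L_1$ in $\gamma$, and you plan to show that any deficit in $R$ would push $L_1$ over capacity. No such contradiction exists, because $L_1$ has room for all $n$ padding components: that is how the target $B$ is built.

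In fact $B$ itself refutes the statement you are targeting. In $B$ the primary component occupies the padding region and the padding components are packed into $L_1$. The $I_i$ and $VC_i$ components are back home, so every edge gadget is completely empty. That is $\Theta(m^{12})$ empty vertices in $R$, with each $Ed_{i,a}$ short by a whole multiple of $500m^{10}$, which is precisely the situation you hoped to exclude. Your argument uses nothing about $\gamma$ beyond the fixed separators and the primary component filling the padding region. For configurations described only that way, both the lemma and your per-piece $O(m)$ claim are false.

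The missing idea is the \emph{timing} of $\gamma$: take it to be the configuration at the moment the primary component jumps into the padding region. Once the separators are fixed, the only pieces with at least $\ell$ vertices are $L_1$ and the padding region. So the primary component jumps directly from $L_1$, and while it sits in $L_1$ it leaves no room for any other component. Hence, at $\gamma$, $L_1$ contains no other component, and all $n$ padding components lie in $R$.

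In your balance identity this means $|\mathrm{empty}(L_1)|=|L_1|$, and the identity collapses to the paper's exact count. The number of empty vertices in $R$ equals $|R|$ minus the total size of the $12n$ gadget components and the $n$ padding components. Using $2m=3n$, this is
\[
2m^2+\tfrac{16}{3}m-6mk=O(m^2).
\]
Once this observation is made, no modular or per-piece analysis is needed.
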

\begin{proof}
Let $A$ be the total number of vertices contained in the independent-set gadgets, the vertex-cover gadgets, and the edge gadgets.

By the construction of the gadgets, $A$ is the sum of the following quantities:
\begin{enumerate}
    \item the total number of vertices in the independent-set gadgets:
    \[
    \sum |I_i| = \left((500\cdot 3m\cdot m^{10}+2m+3)\cdot 3+5\right)\cdot n,
    \]
    \item the total number of vertices in the vertex-cover gadgets:
    \[
    \sum |VC_i| = \left((500\cdot 3m\cdot m^{10}+3)\cdot 3+5\right)\cdot n,
    \]
    \item the total number of vertices in the edge gadgets:
    \[
    \sum |Ed_i| = \left(500\cdot 3m\cdot m^{10}+m+4\right)\cdot 2m.
    \]
\end{enumerate}

Next, let $B$ denote the total size of the connected components placed in the independent-set gadgets, the vertex-cover gadgets, and the edge gadgets in the configuration $\gamma$.
In the configuration $\gamma$, the connected component originally in the primary gadget is placed in the padding gadgets, and the separator components remain in the separators. Hence, the connected components that occupy the independent-set gadgets, the vertex-cover gadgets, and the edge gadgets in the configuration $\gamma$ are precisely the following:

\begin{enumerate}
    \item the connected components originally placed in the independent-set gadgets, which are still placed in the independent-set gadgets or the edge gadgets in the configuration $\gamma$, and whose total size is
    \[
    \sum |I_i|-5n = \left((500\cdot 3m\cdot m^{10}+2m+3)\cdot 3\right)\cdot n,
    \]
    \item the connected components originally placed in the vertex-cover gadgets, which are still placed in the vertex-cover gadgets or the edge gadgets in the configuration $\gamma$, and whose total size is
    \[
    \sum |VC_i|-5n = \left((500\cdot 3m\cdot m^{10}+3)\cdot 3\right)\cdot n,
    \]
    \item the connected components originally placed in the padding gadgets, which are placed in the independent-set gadgets or the vertex-cover gadgets in the configuration $\gamma$, and whose total size is
    \begin{align*}
    \sum |MS+ML|-(n-1)
    &=& \left((500\cdot 3m\cdot m^{10}+2m+3)\cdot 3+5\right)\cdot k\\
    & &+ \left((500\cdot 3m\cdot m^{10}+3)\cdot 3+5\right)\cdot (n-k).
    \end{align*}
\end{enumerate}

Therefore, the number of empty vertices contained in the independent-set gadgets, the vertex-cover gadgets, and the edge gadgets is at most $A-B$.

Since the graph is $3$-regular, we have $2m=3n$. Hence,
\begin{align}
A-B
&= \Bigl( \left((500 \cdot 3m \cdot m^{10} + 2m + 3) \cdot 3 + 5\right) \cdot n \notag\\
&\quad + \left((500 \cdot 3m \cdot m^{10} + 3) \cdot 3 + 5\right) \cdot n \notag\\
&\quad + \left(500 \cdot 3m \cdot m^{10} + m + 4\right) \cdot 2m \Bigr) \notag\\
&\quad - \Bigl( \left((500 \cdot 3m \cdot m^{10} + 2m + 3) \cdot 3\right) \cdot n \notag\\
&\quad + \left((500 \cdot 3m \cdot m^{10} + 3) \cdot 3\right) \cdot n \notag\\
&\quad + \left((500 \cdot 3m \cdot m^{10} + 2m + 3) \cdot 3 + 5\right) \cdot k \notag\\
&\quad + \left((500 \cdot 3m \cdot m^{10} + 3) \cdot 3 + 5\right) \cdot (n-k) \Bigr) \notag\\
&= \left(6m+5\right)n + 5n + \left(500 \cdot 3m \cdot m^{10} + m + 4\right) \cdot 2m \notag\\
&\quad - \left((500 \cdot 3m \cdot m^{10} + 2m + 3) \cdot 3 + 5\right) \cdot k \notag\\
&\quad - \left((500 \cdot 3m \cdot m^{10} + 3) \cdot 3 + 5\right) \cdot (n-k) \notag\\
&= (6m+10)n + \left(500 \cdot 3m \cdot m^{10} + m + 4\right) \cdot 2m \notag\\
&\quad - \left((500 \cdot 3m \cdot m^{10} + 3) \cdot 3 + 5\right)n - 6mk \notag\\
&= (6m+5)n + \left(500 \cdot 3m \cdot m^{10} + m + 4\right) \cdot 2m \notag\\
&\quad - \left(500 \cdot 3m \cdot m^{10} + 3\right) \cdot 3n - 6mk \notag\\
&= (6m+5)n + 2m^2 + 8m - 9n - 6mk \notag\\
&= 6mn + 2m^2 - 4n + 8m - 6mk \notag\\
&= 2m^2 + \frac{16}{3}m - 6mk,
\end{align}
Since $k \le n = O(m)$, we obtain $A-B = O(m^2)$.
\end{proof}

\begin{lemma}
    \label{lma:6}
For each $j\in\{1,2,3\}$, let $Ed_j^i$ denote the edge gadget corresponding to $e_{v_i,j}$.
Also, within $Ed_j^i$, let $Ed_{j,a}^i$ denote the caterpillar consisting of $P_a$ and its attached leaves, and let $Ed_{j,b}^i$ denote the caterpillar consisting of $P_b$ and its attached leaves.
Then, for the independent-set gadget $I_i$, the following holds:
for any $j\in\{1,2,3\}$, the connected component that is initially placed on $p_{4j-3}$ and its attached leaves is, in the configuration $\gamma$, placed either so as to contain $p_{4j-3}$ or on $Ed_{j,a}^i$.
\end{lemma}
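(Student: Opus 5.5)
The plan is a counting argument on connected vertex sets of the appropriate size, combined with the fact that separators never move (Lemma~\ref{lma:4}).

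Let $c$ be the component initially on $p_{4j-3}$ of $I_i$. Write $t=\pi^{-1}(e_{v_i,j})$ and $\rho=\rho_{e_{v_i,j}}(v_i)$. Then
\[
|c| = 500tm^{10}+m+\rho+1 .
\]

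\textbf{Step 1: $c$ stays inside one gadget.} By Lemma~\ref{lma:4}, every separator component permanently occupies its spine vertex. Every gadget boundary except those among the padding gadgets passes through a separator spine vertex, so at every time $c$ lies inside a single gadget: an $I$, $VC$, $Ed_{\cdot,a}$, $Ed_{\cdot,b}$ gadget, or the padding/primary region. The primary gadget is a bare path. A connected set in the padding region meets at most $\sim 2n$ spine vertices plus the leaf stars of those spines. Spine vertices are full in $\gamma$, and leaves have no other neighbours, so in $\gamma$ the component $c$ cannot be in the padding region. The same applies to $L_1$, which is occupied by the padding components and has no leaves anyway.

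\textbf{Step 2: classify connected sets of size $|c|$ in each gadget.} In any caterpillar gadget, a connected set is determined by a spine interval plus a choice of some of the leaves hanging on that interval. Its size is therefore at most the interval length plus the leaf counts on the interval. I would do arithmetic modulo $500m^{10}$ (the leaf counts are multiples of $500m^{10}$ plus $O(m)$ offsets).

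In $I_{i'}$ or $VC_{i'}$, the big leaf stars carry coefficients $t'$ and $3m-t'$. An interval that avoids both the star at $p_{4j-3}$ and any star with coefficient at least $t$ has too few vertices unless it contains another star of coefficient at least $t$. Conversely, an interval containing two stars with total coefficient at least $t+1$, or one star of coefficient at least $t+1$, can also host $c$. So this crude counting does not localize $c$ by itself. I therefore need the size budget of Lemma~\ref{lma:5}.

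\textbf{Step 3: use Lemma~\ref{lma:5} as the key tool.} In $\gamma$ only $O(m^2)$ vertices are empty among the $I$, $VC$ and $Ed$ gadgets. Every leaf star has size $\Theta(tm^{11})$ or is tiny. If $c$ were placed in some gadget region not containing $p_{4j-3}$, it would occupy a big star, or part of one, whose original owner must go elsewhere, and the displaced mass would create empty space.

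I would argue by a global matching. Each big star of coefficient $s$ (up to $O(m)$) must be covered in $\gamma$ by components. These components are
\begin{itemize}
\item the original $I$/$VC$ components, with sizes $500s'm^{10}+O(m)$;
\item padding components, each covering a whole $I$ or $VC$ gadget, whose total count is exactly $n$;
\item in addition, $L_1$ is full.
\end{itemize}
Counting shows that the padding components fill exactly $n$ whole $I$/$VC$ gadgets. The remaining $n$ gadgets' worth of original components, each pinned to its own $p$-vertex or displaced, must then fill the edge gadgets and the remaining $I$/$VC$ gadgets with $O(m^2)$ slack.

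Because the coefficients $t$ are distinct and separated by $500m^{10}$, which is far larger than the slack, a component of coefficient $t$ placed on a star of coefficient $s\neq t$ either overfills it, which is impossible, or leaves at least $500m^{10}$ of that star empty. The latter contradicts Lemma~\ref{lma:5}. A component cannot overfill a star unless it takes two stars, and then the sizes again fail modulo $500m^{10}$ within $O(m^2)$ slack.

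\textbf{Conclusion and main obstacle.} This pins $c$ to a star of coefficient exactly $t$. The stars of coefficient exactly $t$ on the ``$a$'' side are
\begin{itemize}
\item $p_{4j-3}$ of $I_i$ and of the $VC$ gadget of $v_i$;
\item the corresponding star of the other endpoint of $e_{v_i,j}$ in its $I$ and $VC$ gadgets;
\item $p_1,p_5$ of $Ed^i_{j,a}$.
\end{itemize}
I still have to exclude the other endpoint's $I$/$VC$ gadgets and the $VC$ gadget of $v_i$. The $VC$ stars lack the $+m$ offset, so together with their spine neighbours they are too small by about $m$. Adding adjacent spine/leaf vertices gives at most $O(1)$ extra vertices from the path pieces $p_2,p_4$, which have no leaves, so $c$ does not fit. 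For the other endpoint's $I$ gadget, the star there is occupied by its own component or needs that space. I expect that this last exclusion, which requires tracking which component sits on the star, is the main obstacle. My first attempt would be a pigeonhole argument over the two components of coefficient $t$ with the $+m$ offset competing for that single star.
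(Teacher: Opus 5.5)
\textbf{Gap: the case you leave open cannot be closed, because the statement fails there.} Let $e=e_{v_i,j}=(v_i,v_r)$ have index $t$, with $v_i$ its tail. Then $|C|=500tm^{10}+m+1$, while the $a$-star of $e$ in $I_r$ (center plus leaves) has $500tm^{10}+m+2$ vertices. Take the yes-construction of Lemma~\ref{lma:yes} for an independent set $T$ with $v_i,v_r\notin T$; such an edge exists whenever $k<n/2$. In it:
\begin{itemize}
\item $VC_i$ and $VC_r$ hold $ML$ components;
\item $C$ and the corresponding component $C_r$ of $I_r$ are at home;
\item the $VC$-components $D_i,D_r$ of $e$ sit on $p_1$ and $p_5$ of $Ed_{t,a}$.
\end{itemize}
Before the primary component moves, do the following. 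Park $D_r$ in the then-empty padding region. Move $C_r$ onto $p_3,p_4,p_5$ of $Ed_{t,a}$ with their leaves; this is exactly $500tm^{10}+m+2$ vertices, and it is nonadjacent to $D_i$ since $p_2$ stays empty. Move $C$ onto the $a$-star of $I_r$, leaving one leaf spare. Finally move $D_r$ onto the freed $a$-star of $I_i$, leaving $m-1$ leaves spare. Every move is legal and the number of empty vertices is unchanged. After the primary component moves, you get a $\gamma$ in which $C$ neither contains $p_{4j-3,I_i}$ nor lies on $Ed^i_{j,a}$.

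So no pigeonhole over the two $+m$ components can work. $C_r$ has a second exact home in $Ed_{t,a}$, and the $O(m^2)$ budget of Lemma~\ref{lma:5} easily absorbs a $VC$-component sitting on an $I$-star. A provable version needs an extra hypothesis such as ``$I_r$ is filled by a padding component''. That is the only situation in which Lemma~\ref{lma:7} uses the statement.

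\textbf{A second step also fails.} Your claim that a placement over two big stars ``fails modulo $500m^{10}$'' is wrong when $t$ is even. $Ed_{t/2,a}$ has $1000(t/2)m^{10}+m+4=500tm^{10}+m+4$ vertices, and $C$ fits on it: the whole spine, both big stars, and $m-5+\rho_e(v_i)$ leaves of $p_3$. Only $3-\rho_e(v_i)$ vertices are left empty. This placement needs its own argument, or a different choice of edge weights.

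\textbf{Comparison with the paper.} The paper's proof follows essentially your route: the slack from Lemma~\ref{lma:5}, then a case split by gadget type. It does not handle either case. It asserts $\Omega(m^{10})$ waste for every placement containing two big stars. It also asserts a size gap of at least $500m^{10}-1$ whenever $(r,t)\neq(i,j)$, which overlooks the other endpoint of the same edge. Your list of exact-coefficient stars and your $+m$ argument excluding the $VC$ stars are correct, and more careful than the paper's case split.
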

\begin{proof}
Fix an arbitrary $j\in\{1,2,3\}$, and let $C$ be the connected component that is initially placed on the vertex $p_{4j-3}$ of $I_i$ together with its attached leaves.
We prove the claim by contradiction.
Suppose that, in the configuration $\gamma$, $C$ is placed so as not to contain $p_{4j-3}$ and is not placed on $Ed_{j,a}^i$.

In the configuration $\gamma$, every connected component except the one originally placed in the primary gadget is contained in an independent-set gadget, a vertex-cover gadget, or an edge gadget.
Hence, $C$ must be placed in one of the following:
a vertex-cover gadget, an independent-set gadget not containing $p_{4j-3,i}$, or an edge gadget other than $Ed_{j,a}^i$.

Now, among the spine vertices of the independent-set gadgets and the vertex-cover gadgets, the only vertices with at least $m^{10}$ attached leaves are $p_{4t-3}$ and $p_{4t-1}$ for $t\in\{1,2,3\}$.
Similarly, among the spine vertices of the edge gadgets, the only such vertices are $p_{1,a},p_{5,a},p_{1,b},p_{5,b}$.
Moreover, by Lemma~\ref{lma:5}, the total number of empty vertices in the configuration $\gamma$ is at most $O(m^2)$.
Therefore, if $C$ were placed so as to contain two or more such vertices, then at least $\Omega(m^{10})$ empty vertices would be created, contradicting Lemma~\ref{lma:5}.
Hence, $C$ cannot be placed so as to contain two or more such vertices.

\noindent
\textbf{Case 1: $C$ is placed in a vertex-cover gadget.}

Let $VC_r$ be the vertex-cover gadget in which $C$ is placed.
By the argument above, $C$ must be placed so as to contain exactly one of $p_{4t-3}$ and $p_{4t-1}$ in $VC_r$, for some $t\in\{1,2,3\}$.

The size of $C$ is
\[
500\pi^{-1}(e_{v_i,j})m^{10}+m+1+\rho_{e_{v_i,j}}(v_i).
\]
On the other hand, the size of a candidate destination vertex set in $VC_r$ is either
\[
500\pi^{-1}(e_{v_r,t})m^{10}+1+\rho_{e_{v_r,t}}(v_r)
\]
or
\[
500(3m-\pi^{-1}(e_{v_r,t}))m^{10}+2-\rho_{e_{v_r,t}}(v_r).
\]

If $(i,j)\neq(r,t)$, then the absolute difference between these sizes is at least $500m^{10}-m-1$.
Thus, if $C$ is smaller, then $\Omega(m^{10})$ empty vertices are created; if $C$ is larger, then $C$ must contain two or more vertices having at least $m^{10}$ attached leaves.
Both are impossible.

If $(i,j)=(r,t)$, then the size of $C$ is still larger than that of the corresponding vertex set on the $p_{4j-3}$ side by $m$, and is even larger than that of the corresponding vertex set on the $p_{4j-1}$ side.
Hence $C$ cannot be placed there, either.

Therefore, Case 1 cannot occur.

\noindent
\textbf{Case 2: $C$ is placed in an independent-set gadget not containing $p_{4j-3,i}$.}

Let $I_r$ be the independent-set gadget in which $C$ is placed.
Again, $C$ must be placed so as to contain exactly one of $p_{4t-3}$ and $p_{4t-1}$ in $I_r$, for some $t\in\{1,2,3\}$.

The size of $C$ is
\[
500\pi^{-1}(e_{v_i,j})m^{10}+m+1+\rho_{e_{v_i,j}}(v_i),
\]
whereas the size of a candidate destination vertex set in $I_r$ is either
\[
500\pi^{-1}(e_{v_r,t})m^{10}+m+1+\rho_{e_{v_r,t}}(v_r)
\]
or
\[
500(3m-\pi^{-1}(e_{v_r,t}))m^{10}+m+2-\rho_{e_{v_r,t}}(v_r).
\]

If $(i,j)\neq(r,t)$, then the absolute difference between these sizes is at least $500m^{10}-1$.
Hence, as in Case 1, either $\Omega(m^{10})$ empty vertices are created or $C$ must contain two or more vertices having at least $m^{10}$ attached leaves.
Both are impossible.

Next, consider the case $(i,j)=(r,t)$.
Since $C$ does not contain $p_{4j-3,i}$ by assumption, the only remaining possibility inside $I_i$ is that $C$ is placed on the $p_{4j-1,i}$ side.
However, the size of the vertex set consisting of $p_{4j-1,i}$ and its attached leaves is
\[
500(3m-\pi^{-1}(e_{v_i,j}))m^{10}+m+2-\rho_{e_{v_i,j}}(v_i),
\]
and thus the difference from the size of $C$ is
\[
500(3m-2\pi^{-1}(e_{v_i,j}))m^{10}+1-2\rho_{e_{v_i,j}}(v_i).
\]
This is $\Omega(m^{10})$, contradicting Lemma~\ref{lma:5}.

Therefore, Case 2 cannot occur.

\noindent
\textbf{Case 3: $C$ is placed in an edge gadget other than $Ed_{j,a}^i$.}

First, suppose that $C$ is placed in $Ed_{j,b}^i$.
Then $C$ must be placed so as to contain exactly one of $p_{1,b}$ and $p_{5,b}$.
The sizes of the corresponding vertex sets are
\[
500(3m-\pi^{-1}(e_{v_i,j}))m^{10}+1
\quad \text{or} \quad
500(3m-\pi^{-1}(e_{v_i,j}))m^{10}+2.
\]
Hence the size difference from $C$ is at least
\[
500(3m-2\pi^{-1}(e_{v_i,j}))m^{10}-m,
\]
which is $\Omega(m^{10})$.
This contradicts Lemma~\ref{lma:5}.

Next, suppose that $C$ is placed in $Ed_r$ for some $r\neq \pi^{-1}(e_{v_i,j})$.
Again, $C$ must be placed so as to contain exactly one of
$p_{1,a},p_{5,a},p_{1,b},p_{5,b}$.
The size of a candidate destination vertex set is one of
\[
500rm^{10}+1,\quad 500rm^{10}+2,\quad
500(3m-r)m^{10}+1,\quad 500(3m-r)m^{10}+2.
\]
Since $r\neq \pi^{-1}(e_{v_i,j})$, the absolute difference between these sizes and the size of $C$ is at least $500m^{10}-m-1$.
Thus, as in Case 1, either $\Omega(m^{10})$ empty vertices are created or $C$ must contain two or more vertices having at least $m^{10}$ attached leaves.
Both are impossible.

Therefore, Case 3 cannot occur.

Since all cases lead to contradictions, the assumption is false.
Therefore, in the configuration $\gamma$, the connected component that is initially placed on $p_{4j-3}$ and its attached leaves is placed either so as to contain $p_{4j-3}$ or on $Ed_{j,a}^i$.
\end{proof}

We are now ready to prove Lemma~\ref{lma:7}. Lemma~\ref{lma:5} bounds the number of empty vertices in the configuration $\gamma$, and Lemma~\ref{lma:6} uses this bound to restrict where a connected component from an independent-set gadget can be placed.

\begin{lemma}
\label{lma:7}
In the configuration $\gamma$, the two independent-set gadgets $I_i$ and $I_j$ corresponding to the endpoints of an edge $e_k=(v_i,v_j)$ cannot both be empty.
\end{lemma}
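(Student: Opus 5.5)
Let $e_k=(v_i,v_j)$ and suppose, for contradiction, that both $I_i$ and $I_j$ are empty in $\gamma$. Say $e_k=e_{v_i,s}=e_{v_j,t}$ for indices $s,t\in\{1,2,3\}$. Since $e_k$ is oriented, exactly one endpoint is its tail, so $\rho_{e_k}(v_i)+\rho_{e_k}(v_j)=1$.

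The first step is to locate the components that $I_i$ and $I_j$ held initially. Because $I_i$ is empty, the component $C_i$ originally on $p_{4s-3}$ of $I_i$ (with its leaves) no longer contains that spine vertex. By Lemma~\ref{lma:6} it must therefore lie on $Ed_{k,a}$. The same argument gives the component $C_j$ from $p_{4t-3}$ of $I_j$, which must also lie on $Ed_{k,a}$.

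Next we extract a counting constraint. The sizes are $|C_i|=500k m^{10}+m+1+\rho_{e_k}(v_i)$ and $|C_j|=500k m^{10}+m+1+\rho_{e_k}(v_j)$. Their sum is $1000km^{10}+2m+3$. The gadget $Ed_{k,a}$ has exactly $|Ed_{k,a}|=500km^{10}+500km^{10}+1+(m-2)+5=1000km^{10}+m+4$ vertices. Two vertex-disjoint components placed in it would need at least $1000km^{10}+2m+3$ vertices. That exceeds $|Ed_{k,a}|$ by $m-1>0$ once $m\ge 2$, which holds for a 3-regular graph. The two components therefore cannot coexist in $Ed_{k,a}$, a contradiction. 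Even without counting, non-adjacency and disjointness along a spine of length $5$ would force the pattern $\{p_1,p_2,\dots\}$ / $\{\dots,p_4,p_5\}$, which is already too tight. The key quantity, however, is the extra $+m$ carried by each independent-set component.

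The main obstacle is justifying the appeal to Lemma~\ref{lma:6}. That lemma says each component from $I_i$ "contains $p_{4j-3}$ or is on $Ed^i_{j,a}$". We must check that "contains $p_{4s-3}$" genuinely contradicts emptiness of $I_i$, which is immediate from the definition of an empty gadget. We must also check that "on $Ed_{k,a}$" means the component is contained in $Ed_{k,a}$. Separators and Lemma~\ref{lma:4} guarantee this, since no component can straddle a separator.

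If one prefers not to rely on that interpretation, the fallback is to apply Lemma~\ref{lma:5} directly. Squeezing two components of size about $500km^{10}$ into a region with one heavy vertex would either force a component to contain two heavy vertices or leave $\Omega(m^{10})$ empty vertices. Either outcome again contradicts Lemma~\ref{lma:5}.
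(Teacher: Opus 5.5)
Your proof is correct and follows the paper's argument: emptiness of $I_i$ and $I_j$ means the components from $p_{4s-3}$ of $I_i$ and $p_{4t-3}$ of $I_j$ no longer contain those vertices, so Lemma~\ref{lma:6} places both on $Ed_{k,a}$, and because $\rho_{e_k}(v_i)+\rho_{e_k}(v_j)=1$ their total size $1000km^{10}+2m+3$ exceeds $|Ed_{k,a}|=1000km^{10}+m+4$. Your closing fallback via Lemma~\ref{lma:5} is unnecessary, since Lemma~\ref{lma:6} together with the separators already confines both components to $Ed_{k,a}$; it is also imprecise, because $Ed_{k,a}$ has two heavy vertices, $p_{1,a}$ and $p_{5,a}$, not one.
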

\begin{proof}
We prove this by contradiction.
Suppose that, in the configuration $\gamma$, both $I_i$ and $I_j$ are empty.
Let $h,l\in\{1,2,3\}$ be such that $e_k=e_{v_i,h}=e_{v_j,l}$.

Since $I_i$ and $I_j$ are empty, the connected components that were initially placed on
$p_{4h-3,i}$ and its attached leaves, and on $p_{4l-3,j}$ and its attached leaves, do not contain their original vertices in $\gamma$.
Therefore, by Lemma~\ref{lma:6}, both of these connected components must be placed on $Ed_{k,a}$.

The number of vertices in $Ed_{k,a}$ is $1000km^{10}+m+4$.
On the other hand, the sum of the sizes of the above two connected components is
\[
(500km^{10}+m+1+\rho_{e_k}(v_i))
+(500km^{10}+m+1+\rho_{e_k}(v_j))
=1000km^{10}+2m+3,
\]
because $\rho_{e_k}(v_i)+\rho_{e_k}(v_j)=1$.

Since $1000km^{10}+2m+3 > 1000km^{10}+m+4$,
these two connected components cannot be placed on $Ed_{k,a}$ without overlapping.
This is a contradiction.
\end{proof}

Using the lemmas above, we now prove Lemma~\ref{lma:no}.

\begin{proof}[Proof of Lemma~\ref{lma:no}]    
We prove the contrapositive; that is, we show that if $A$ can be reconfigured into $B$ in $G'$, then $G$ has an independent set of size at least $k$.
Assume that $A$ can be reconfigured into $B$ in $G'$.
By Lemma~\ref{lma:4}, every connected component initially placed in a separator must continue to contain its original separator vertex throughout the reconfiguration. If the reconfiguration from $A$ to $B$ is possible, then at some point there must appear a configuration $\gamma$ in which the connected component in the primary gadget is moved into the padding gadgets.
Since the primary component is placed in the padding gadgets in the configuration $\gamma$, all padding gadgets are empty.
Let $S'=\{\,v_i \mid I_i \text{ is occupied by a connected component from } MS \text{ in } \gamma\,\}$.
We show that $S'$ is an independent set of size at least $k$ in $G$. 
The $k$ connected components initially placed in the independent-set padding gadgets must be moved out of the padding gadgets.
Each such connected component has size $\sizeMS-1$, which is exactly the number of vertices of an independent-set gadget.
Therefore, each of them occupies one independent-set gadget entirely, and so $|S'|=k$.

It remains to show that $S'$ is independent.
Suppose, to the contrary, that there exist adjacent vertices $v_i,v_j\in S'$ with $e_k=(v_i,v_j)\in E(G)$.
Then both $I_i$ and $I_j$ are occupied by connected components initially placed in independent-set padding gadgets, and hence both gadgets are empty with respect to the components initially placed there.
This contradicts Lemma~\ref{lma:7}. Hence $S'$ is an independent set of size at least $k$. This proves the contrapositive of Lemma~\ref{lma:no}, and the lemma follows.
\end{proof}
\begin{proof}[Proof of Theorem~\ref{thm:main}]
By Lemmas~\ref{lma:yes} and~\ref{lma:no}, given an instance $\Phi=(G,k)$ of the Independent Set problem on a $3$-regular graph, we can construct an instance of \CCRCJ~on a caterpillar graph such that $G$ has an independent set of size $k$ if and only if the reconfiguration is possible. Therefore, Theorem~\ref{thm:main} follows.
\end{proof}

\section{\CCRCJ~for Path Graphs}

In this section, we study the decision problem for \CCRCJ~on path graphs and prove Theorem~\ref{thm:intro-path}. We also present a constructive result for instances with sufficiently large empty space.

Let $(G,A,B)$ be an instance of \CCRCJ, and suppose that $G$ is a path graph. Let $v_1,\dots,v_n$ be the vertices of the path ordered from left to right. For any vertex set $U\subseteq V$, let $h_U$ denote the sequence of the sizes of the connected components of $U$, listed from left to right. To distinguish connected components of the same size, we introduce the sequence $h'_U$, obtained by appending indices according to their order of appearance. Let $h'_A=(x_1,\dots,x_k)$ and $h'_B=(y_1,\dots,y_k)$. For an element $z$ in $h'_A$, let $\sigma_A(z)$ denote the position of $z$ in $h'_A$, and similarly let $\sigma_B(z)$ denote the position of $z$ in $h'_B$. Define
\[
\operatorname{inv}(h'_A,h'_B)
=\{(x_i,y_j)\mid \sigma_A(x_i)<\sigma_A(y_j),\ \sigma_B(x_i)>\sigma_B(y_j)\}.
\]
That is, $\operatorname{inv}(h'_A,h'_B)$ is the set of pairs of connected components whose relative order must be reversed when transforming $A$ into $B$. Let $k=|C(A)|=|C(B)|$, and define $b(A,k)=n-|A|-k$.

Nakahata~\cite{nakahata2025reconfiguring} proved the following characterization and an $O(n^2)$-time algorithm for the decision problem on path graphs.

\begin{lemma}[\cite{nakahata2025reconfiguring}]
\label{lma:path}
$A$ and $B$ are reconfigurable under \CCRCJ~if and only if
\begin{align}
\max_{(x_i,y_j)\in \operatorname{inv}(h'_A,h'_B)} \min\{x_i,y_j\} \leq b(A,k)
\end{align}
holds.
\end{lemma}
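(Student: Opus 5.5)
We prove both directions of the characterization of Lemma~\ref{lma:path} directly on the path $v_1,\dots,v_n$. Throughout, $b(A,k)=n-|A|-k$ is the number of \emph{spare} empty vertices: the empty vertices beyond the one mandatory gap vertex that each component needs.

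\textbf{Necessity.} Suppose $A$ can be reconfigured into $B$, and fix a pair $(x_i,y_j)\in\operatorname{inv}(h'_A,h'_B)$. Their relative left-to-right order is reversed between $A$ and $B$. Components on a path cannot pass one another without a jump, so at some step one of the two, say $z$, is removed and re-placed on the other side of its partner.

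Consider the configuration just after this jump. The vertex set $V\setminus(\text{all components})$ has size $n-|A|$. Every component other than $z$ occupies its own vertices and needs separating gaps, so I count slack as follows. Before the move, $z$ sits in a maximal empty-free stretch. The destination must be a run of at least $|z|$ vertices that are free of other components and non-adjacent to them.

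Rather than argue geometrically about separating gaps, I use the cleaner accounting. Remove $z$ from the configuration. The remaining $k-1$ components, with the gaps they require, leave at most $n-(|A|-|z|)-(k-1)+1$ positions. Placing $z$ in a new window disjoint from its old window requires the empty space outside the old window to contain $|z|$ vertices plus its two flanking gap vertices. Since the old position still counts toward the budget only through its gaps, the counting gives $|z|\le b(A,k)$. Hence $\min\{x_i,y_j\}\le|z|\le b(A,k)$.

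I expect the delicate point to be the boundary case. When $z$ is at an end of the path, only one flanking gap is needed, and the counting must be checked so that exactly $b(A,k)$ comes out rather than $b(A,k)+1$.

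\textbf{Sufficiency.} Here I give an explicit procedure. Call a component \emph{small} if its size is at most $b=b(A,k)$.

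First, compact everything to the left. Sliding moves (each a legal jump by one vertex) push all components left until consecutive components have exactly one gap vertex between them. This creates a free block of $b$ vertices at the right end, plus its separating gap, in which any small component fits.

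The condition says that every inverted pair contains a small component. Equivalently, the relative order of the large components is the same in $h'_A$ and $h'_B$. We can therefore realize the target order, as in insertion sort, by repeatedly taking a small component, jumping it into the free right block, recompacting, and jumping it back into its correct slot between large components. The large components never need to change order.

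Once the sequence $h_B$ is achieved in compacted form, we spread the components to their exact positions in $B$. Doing the same compaction from $B$ and reversing those moves finishes the construction.

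The main obstacle is this reordering phase: the free block must always remain of size at least $b$ whenever a small component needs to be parked. I will argue this by the invariant that after each recompaction exactly $b$ spare vertices lie at the right end.
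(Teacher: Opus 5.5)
The paper does not prove this lemma; it only cites Nakahata, so your proposal has to stand on its own. Your plan is the natural one: a crossing argument for necessity, and compaction plus moving only components of size at most $b(A,k)$ for sufficiency. However, the necessity direction as written has a genuine gap.

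The pairs in $\operatorname{inv}(h'_A,h'_B)$ are defined through the occurrence-index labeling. This is a fixed matching that sends the $t$-th component of size $s$ in $A$ to the $t$-th component of size $s$ in $B$. A reconfiguration sequence need not respect this matching, so your claim that ``at some step one of the two is re-placed on the other side of its partner'' is false in general. For example, take $n=14$, $A=\{v_1,v_2\}\cup\{v_4,v_5\}\cup\{v_7,\dots,v_{11}\}$ and $B=\{v_1,v_2\}\cup\{v_4,\dots,v_8\}\cup\{v_{10},v_{11}\}$. Then $b(A,k)=2$ and $\operatorname{inv}(h'_A,h'_B)=\{(2_2,5_1)\}$. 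The jumps $\{v_1,v_2\}\to\{v_{13},v_{14}\}$, $\{v_4,v_5\}\to\{v_1,v_2\}$, $\{v_7,\dots,v_{11}\}\to\{v_4,\dots,v_8\}$, $\{v_{13},v_{14}\}\to\{v_{10},v_{11}\}$ reach $B$. Yet the component that starts as $2_2$ and the size-$5$ component never pass each other.

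The repair is to use the crossing bound only to show that two components of size greater than $b(A,k)$ can never exchange order. Then the left-to-right size sequence of these large components is invariant, so the occurrence-index labeling matches them order-preservingly. Hence no pair in $\operatorname{inv}(h'_A,h'_B)$ consists of two large components, and the inequality follows.

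Separately, you never actually carry out the count. Your expression $n-(|A|-|z|)-(k-1)+1$ leads nowhere, and the boundary worry is a symptom of that. The clean count is as follows. Just before the move in which $z$ passes a component $w$, consider the old copy of $z$, the new copy, and the other $k-1$ components. The new copy is disjoint from and non-adjacent to the old one, since $w$ lies between them. So these are $k+1$ pairwise non-adjacent intervals of the path, which need at least $k$ separating vertices. Therefore $|A|+|z|+k\le n$, i.e.\ $|z|\le b(A,k)$, and no end-of-path case arises.

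In the sufficiency direction, the step you skip is the reinsertion, not the parking. After recompaction, consecutive components are separated by exactly one vertex, so there is no slot to jump $z$ back into; you must open one. Park $z$ on $\{v_{n-|z|+1},\dots,v_n\}$ and compact the other components to the left. Then there are $b(A,k)+2$ free vertices in front of $z$. So the components between the target slot and $z$ can be shifted right by $|z|+1\le b(A,k)+1$, rightmost first, while keeping a gap before $z$. This creates the $|z|+2$ free vertices the insertion needs (or $|z|+1$ at the left end of the path). With that step, and with small components inserted so that after each insertion the large components and the already-processed small ones appear in the order of $h'_B$, the sufficiency argument goes through.
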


We improve the above $O(n^2)$-time algorithm by proving Theorem~\ref{thm:intro-path}. We also prove a constructive result for instances with sufficiently large empty space.
\begin{theorem}
\label{thm:myconstruct}
    Suppose that $G$ is a path graph.
    If the answer to the \CCRCJ~instance $(G,A,B)$ is YES and the following inequality holds, then there exists a reconfiguration sequence from $A$ to $B$ of length $O(n\log n)$, and such a sequence can be output in $O(n\log n)$ time:
    \[
    \max_{(x_i,y_j)\in \operatorname{inv}(h'_A,h'_B)} 2\max\{x_i,y_j\}\le b(A,k).
    \]
\end{theorem}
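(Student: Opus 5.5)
Under the hypothesis of Theorem~\ref{thm:myconstruct}, my plan is to realize the permutation from $h'_A$ to $h'_B$ by a merge-sort-like divide and conquer. Each level of recursion costs $O(n)$ moves, there are $O(\log n)$ levels, and the total is $O(n\log n)$ moves.

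\textbf{Normalization.} First I will compact all components of $A$ to the left, in the order $h'_A$, with exactly one empty vertex between consecutive components. This leaves a free block of $b(A,k)+1$ vertices on the right. I will process components from left to right; each move shifts a component leftward into free space within the region it already overlaps or borders, so each component moves once. Applying the same compaction to $B$ and reversing that sequence reduces the problem to sorting a compacted configuration, which costs only $O(k)=O(n)$ moves on each side.

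\textbf{Recursive sorting.} In the compacted configuration, I will sort the components by their target rank $\sigma_B$ using a mergesort-style recursion: sort the left half, sort the right half, then merge. The merge works by repeatedly taking the next component in target order from the left or right run and appending it to an output run built in the free area. The key observation is the following. A component $z$ can only be blocked when jumping past another component $w$ if the pair $(z,w)$ is inverted, and in that case $b(A,k)\ge 2\max\{|z|,|w|\}$. That factor of $2$ leaves room both to park a component temporarily and to relocate it, so every merge step needs only $O(1)$ jumps per component.

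\textbf{Bookkeeping.} To make this precise, I will keep the invariant that the free space forms one contiguous block of length at least $b(A,k)$. Each element is moved a constant number of times per merge level. The recursion depth is $O(\log k)$, giving $O(k\log k)\subseteq O(n\log n)$ moves. The moves are generated explicitly, so the output time matches, provided positions are maintained with prefix sums within each run.

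\textbf{Main obstacle.} The hard step is the merge itself. A component that is not inverted with any other can be arbitrarily large (even larger than $b(A,k)$), so it can never be jumped into the free block. Such components must be left in place and act as barriers. To handle this, I will first split the sequence at every large component that is not inverted with anything and recurse independently on each segment between barriers. Within a segment, every component is involved in some inversion and therefore has size at most $b(A,k)/2$. The merge then only touches components of size at most $b(A,k)/2$, which fit twice into the free block, so a swap-through-buffer is always feasible. Verifying that the free block can be routed to each segment without moving the barriers is the step I expect to require the most care; I would handle it by shifting the compacted gaps across the barrier one component at a time, which costs $O(n)$ moves per level.
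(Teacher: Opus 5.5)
\textbf{The gap is the merge step.} The claim that it costs $O(1)$ jumps per component is not justified, and the natural implementation of your description does not achieve it.

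The merged run can be far longer than $b(A,k)$, so it cannot be ``built in the free area.'' The free block has to travel with the output frontier. Picture the layout as [output][free block][remaining left run, $p$ components][remaining right run, $q$ components].
\begin{itemize}
\item Taking the head of the left run is fine: it is one jump across the free block.
\item When the next component in target order is the head $r$ of the right run, $r$ must jump over the whole remaining left run. The $|r|+1$ vertices it vacates lie \emph{between} the two runs, not next to the output frontier.
\item Your invariant (one contiguous free block of length at least $b(A,k)$) can then only be restored by shifting every remaining left-run component, which is $\Theta(p)$ moves.
\end{itemize}
Suppose $b(A,k)$ equals twice the maximum size and the right-run components have near-maximal size, for instance when the merged order is simply the right run followed by the left run. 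Then this shift is needed after essentially every right-run step, giving $\Theta(pq)$ moves for a single merge. The factor $2$ in the hypothesis does not help here. ``Parking and relocating'' amounts to exchanging adjacent components, which costs one exchange per inversion. Likewise, saying a component is ``blocked'' only by inverted components misidentifies the constraint: under CJ nothing is blocked, and what matters is \emph{where} the free space currently sits.

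By contrast, the obstacle you flagged is easy. A CJ move may overlap the moving component's old position, so translating even a huge non-inverted barrier across the free block is a single move. You cannot route free space past a barrier without moving it, but there is no need to avoid moving it.

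The paper avoids merging altogether by using the dual recursion.
\begin{itemize}
\item It splits $S$ by \emph{target} rank into $S_{\mathrm{left}}$ and $S_{\mathrm{right}}$, with free space kept on both sides of the unprocessed middle block.
\item It partitions by always removing a component from one of the two \emph{ends} of that middle block.
\item If both ends are misplaced, both lie in inversions and so have size at most $L(S)$. Since the total side space is at least $2L(S)$, one side has room.
\item The space vacated at that end joins the buffer on the opposite side, so free space is never stranded inside the configuration.
\end{itemize}
Hence each component moves once per level, and $T(m)=2T(m/2)+O(m)$. To repair your argument, replace the merge with this two-ended partition (your compaction preprocessing can stay), or supply a merge that provably uses $O(p+q)$ moves.
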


We first prove Theorem~\ref{thm:intro-path}.
\begin{proof}[Proof of Theorem~\ref{thm:intro-path}]
    By Lemma~\ref{lma:path}, it suffices to determine whether inequality (1) holds in order to decide whether \CCRCJ~is a yes-instance.
    The sequences $h'_A$, $h'_B$, and the value $b(A,k)$ can all be computed in $O(n)$ time.
    Therefore, it remains to compute
    \[
    \max_{(x_i,y_j)\in \operatorname{inv}(h'_A,h'_B)} \min\{x_i,y_j\}
    \]
    in $O(n\log n)$ time.

    For any $1\le i\le k$, we say that an element $x_i$ \emph{contributes to the decision} if there exists an element $y$ such that $(x_i,y)\in \operatorname{inv}(h'_A,h'_B)$ and $x_i\le y$.
    Similarly, for any $1\le j\le k$, we say that an element $y_j$ \emph{contributes to the decision} if there exists an element $x$ such that $(x,y_j)\in \operatorname{inv}(h'_A,h'_B)$ and $y_j\le x$.
    Then
    \[
    \max_{(x_i,y_j)\in \operatorname{inv}(h'_A,h'_B)} \min\{x_i,y_j\}
    \]
    is exactly the maximum among all elements $x_i$ and $y_j$ that contribute to the decision.

    We explain a method based on a segment tree for determining, for all $1\le i\le k$, whether $x_i$ contributes to the decision.
    First, prepare an array $M$ of length $k$ indexed by the positions in $h'_A$.
    Let $M[a]$ denote the $a$-th entry of $M$.
    Initially, all entries are set to $-\infty$.
    Process the elements of $h'_B$ in increasing order of their positions in $h'_B$.
    Let the currently processed element be $x$.
    For each element $x$, perform the following operations:
    \begin{enumerate}
        \item Compute the maximum value $y^*$ in the interval $[\sigma_A(x)+1,k]$ of $M$.
        If the interval $[\sigma_A(x)+1,k]$ is empty, return $-\infty$.
        At this moment, only elements $y$ satisfying $\sigma_B(y)<\sigma_B(x)$ have been reflected in $M$.
        If $x\le y^*$, then we determine that $x$ contributes to the decision; otherwise, it does not.
        By using a range maximum query on a segment tree, this computation can be done in $O(\log k)$ time.
        \item Update $M[\sigma_A(x)]$ to $x$.
    \end{enumerate}
    Thus, whether each $x_i$ contributes to the decision can be determined in $O(n\log n)$ time.
    The same argument applies to the elements $y_j$.
    Therefore, the \CCRCJ~decision problem can be solved in $O(n\log n)$ time.
\end{proof}

Next, we prove Theorem~\ref{thm:myconstruct}.
\begin{proof}[Proof of Theorem~\ref{thm:myconstruct}]
Let $h'_A=(x_1,\dots,x_k)$ be the left-to-right order of the connected components in the initial configuration $A$, and let $h'_B=(y_1,\dots,y_k)$ be the left-to-right order of the connected components in the target configuration $B$.

We describe a recursive procedure that transforms the order $h'_A$ into $h'_B$.
More generally, for any contiguous subsequence $S$ of the current configuration, let $B[S]$ denote the order induced by $h'_B$ on the connected components contained in $S$.
Our recursive procedure rearranges $S$ into the order $B[S]$.

Consider such a subsequence $S$ containing $m$ connected components.
Let $B[S]=(z_1,\dots,z_m)$, and partition these components into
\[
S_{\mathrm{left}}=\{z_i \mid 1\le i\le \lfloor m/2\rfloor\},
\qquad
S_{\mathrm{right}}=\{z_i \mid \lfloor m/2\rfloor < i\le m\}.
\]
Thus, $S_{\mathrm{left}}$ is the set of components that should appear in the first half of the target order $B[S]$, and $S_{\mathrm{right}}$ is the set of components that should appear in the second half.

We first show how to rearrange $S$ so that all components in $S_{\mathrm{left}}$ appear before all components in $S_{\mathrm{right}}$. Let
\[
L(S):=\max_{(x_i,y_j)\in \operatorname{inv}(h'_A,h'_B),\ x_i,y_j\in S} \max\{x_i,y_j\}.
\]
Since every inversion pair formed by components in $S$ is also an inversion pair of the whole instance, we have
\[
L(S)\le \max_{(x_i,y_j)\in \operatorname{inv}(h'_A,h'_B)} \max\{x_i,y_j\}.
\]
Therefore, by the assumption of the theorem, $2L(S)\le b(A,k)$.
Hence we can reserve empty space of total length at least $2L(S)$ on the two sides of the current subsequence.

We maintain three contiguous intervals:
the middle interval, consisting of components not yet moved;
the left interval, consisting of components already moved and belonging to $S_{\mathrm{left}}$;
and the right interval, consisting of components already moved and belonging to $S_{\mathrm{right}}$.
Initially, the whole subsequence $S$ is the middle interval, and the left and right intervals are empty.

At every step, we take one component from either end of the middle interval.
If the leftmost component belongs to $S_{\mathrm{left}}$, we move it to the left interval.
If the rightmost component belongs to $S_{\mathrm{right}}$, we move it to the right interval.
Thus, the only remaining case is that the leftmost component $x$ belongs to $S_{\mathrm{right}}$ and the rightmost component $y$ belongs to $S_{\mathrm{left}}$.

In this case, since $x$ should appear in the right half of $B[S]$ but is currently at the left end of the middle interval, there exists a component $x'\in S_{\mathrm{left}}$ still contained in the middle interval such that $(x,x')$ is an inversion pair.
Similarly, there exists a component $y'\in S_{\mathrm{right}}$ such that $(y',y)$ is an inversion pair.
Therefore,
\[
x\le L(S),\qquad y\le L(S).
\]
Since the total empty space on the two sides has length at least $2L(S)$, at least one side has empty space of length at least $L(S)$.
If the left side has empty space of length at least $L(S)$, then $y$ can be moved to the left interval.
If the right side has empty space of length at least $L(S)$, then $x$ can be moved to the right interval.
Each of these moves takes a component at an endpoint of the middle interval and places it into empty space on one side, without crossing any other component, and hence it can be realized by one valid move under \CCRCJ.
Thus, in either case, one valid move is possible.

Therefore, one component can be removed from the middle interval in each step, and each component is moved at most once in this partition step.
Hence the partition step requires $O(m)$ moves.
After it finishes, the left interval contains exactly the components of $S_{\mathrm{left}}$, and the right interval contains exactly the components of $S_{\mathrm{right}}$.

Next, we recursively rearrange the left interval into the order induced by $B[S]$ on $S_{\mathrm{left}}$, and then recursively rearrange the right interval into the order induced by $B[S]$ on $S_{\mathrm{right}}$.
These recursive calls are performed sequentially, not simultaneously.
Hence, at any moment, workspace is needed for only one subproblem.
Before each recursive call, we compress the components on the other side together with the remaining empty vertices so as to secure empty space of total length at least $2L(S)$ around the current subproblem.
This rearrangement can be done in $O(m)$ moves.

Let $T(m)$ denote the maximum number of moves required to rearrange any subsequence of length $m$ into its target order.
From the discussion above, we obtain
\[
T(m)=T(\lfloor m/2\rfloor)+T(\lceil m/2\rceil)+O(m).
\]
Therefore,
\[
T(m)=O(m\log m).
\]
Since $m\le k\le n$, the total number of moves is $O(n\log n)$.

Finally, the sequence itself can be output within the same time bound.
At each recursive stage, we only need to determine the membership of each component in $S_{\mathrm{left}}$ or $S_{\mathrm{right}}$, execute the partition step, and recurse on the two resulting subsequences.
Each stage takes linear time in the size of the current subsequence, and therefore the whole reconfiguration sequence can be output in $O(n\log n)$ time.
\end{proof}

\bibliography{reference}

\end{document}